\theoremstyle{plain}
\newtheorem{thm}{Theorem}
\newtheorem{lem}[thm]{Lemma}
\newtheorem{pro}[thm]{Proposition}
\theoremstyle{definition}
\newtheorem{defn}[thm]{Definition}
\newtheorem{remark}[thm]{Remark}
\newtheorem{conj}[thm]{Conjecture}
\newcommand{\eq}[1]{(\hyperref[eq:#1]{\ref*{eq:#1}})}
\renewcommand{\sec}[1]{\hyperref[sec:#1]{Section~\ref*{sec:#1}}}
\newcommand{\thrm}[1]{\hyperref[thrm:#1]{Theorem~\ref*{thrm:#1}}}
\newcommand{\lemm}[1]{\hyperref[lemm:#1]{Lemma~\ref*{lemm:#1}}}
\newcommand{\prop}[1]{\hyperref[prop:#1]{Proposition~\ref*{prop:#1}}}
\newcommand{\corr}[1]{\hyperref[corr:#1]{Corollary~\ref*{corr:#1}}}
\newcommand{\fig}[1]{\hyperref[fig:#1]{~\ref*{fig:#1}}}
\newcommand{\deff}[1]{\hyperref[deff:#1]{~\ref*{deff:#1}}}
\newcommand{\mE}{\mathcal{E}}
\newcommand{\mD}{\mathcal{D}}
\newcommand{\mI}{\mathcal{I}}
\newcommand{\mJ}{\mathcal{J}}
\newcommand{\mF}{\mathcal{F}}
\newcommand{\mH}{\mathcal{H}}
\newcommand{\mM}{\mathcal{M}}
\newcommand{\mO}{\mathcal{O}}
\newcommand{\mR}{\mathcal{R}}
\newcommand{\mS}{\mathcal{S}}
\newcommand{\mbI}{\mathbb{I}}
\newcommand{\mbZ}{\mathbb{Z}}
\newcommand{\mfR}{\mathfrak{R}}
\newcommand{\ketbra}[2]{|{#1}\rangle\!\langle{#2}|}
\newcommand{\ba}{\begin{eqnarray}}
\newcommand{\ea}{\end{eqnarray}}
\newcommand{\bann}{\begin{eqnarray*}}
\newcommand{\eann}{\end{eqnarray*}}
\newcommand{\bal}{\begin{equation}\begin{aligned}}
\newcommand{\eal}{\end{aligned}\end{equation}}
\newcommand{\dm}[1]{\ketbra{#1}{#1}}
\newcolumntype{L}[1]{>{\raggedright}p{#1}}
\newcolumntype{C}[1]{>{\centering}p{#1}}
\newcolumntype{R}[1]{>{\raggedleft}p{#1}}
\newcolumntype{D}{>{\centering\arraybackslash}X}
\newcommand{\sbar}{\;\rule{0pt}{9.5pt}\right|\;}
\newcommand{\lset}{\left\{\left.}
\newcommand{\rset}{\right\}}
\DeclareMathAlphabet{\matheu}{U}{eus}{m}{n}
\DeclareMathOperator{\Tr}{Tr}
\DeclareMathOperator{\id}{id}
\DeclareMathOperator{\supp}{supp}
\newcommand{\Di}{\mathit{\Delta}}
\begin{document}


\title{Correlation in Catalysts Enables Arbitrary Manipulation of Quantum Coherence}

\author{Ryuji Takagi}
\email{ryuji.takagi@ntu.edu.sg}
\affiliation{Nanyang Quantum Hub, School of Physical and Mathematical Sciences, Nanyang Technological University, 637371, Singapore}

\author{Naoto Shiraishi}
\email{naoto.shiraishi@gakushuin.ac.jp}
\affiliation{Department of Physics, Gakushuin University, 1-5-1 Mejiro, Toshima-ku, Tokyo 171-8588, Japan}

\begin{abstract}
Quantum resource manipulation may include an ancillary state called a catalyst, which aids the transformation while restoring its original form at the end, and characterizing the enhancement enabled by catalysts is essential to reveal the ultimate manipulability of the precious resource quantity of interest. Here, we show that allowing correlation among multiple catalysts can offer arbitrary power in the manipulation of quantum coherence. We prove that \emph{any} state transformation can be accomplished with an arbitrarily small error by covariant operations with catalysts that may create a correlation within them while keeping their marginal states intact. This presents a new type of embezzlement-like phenomenon, in which the resource embezzlement is attributed to the correlation generated among multiple catalysts. We extend our analysis to general resource theories and provide conditions for feasible transformations assisted by catalysts that involve correlation, putting a severe restriction on other quantum resources for showing this anomalous enhancement, as well as characterizing achievable transformations in relation to their asymptotic state transformations. Our results provide not only a general overview of the power of correlation in catalysts but also a step toward the complete characterization of the resource transformability in quantum thermodynamics with correlated catalysts.

\end{abstract}


\maketitle

\textit{\textbf{Introduction.}}
---
Quantum superposition, also known as \emph{quantum coherence}, is one of the most striking quantum features and also a useful operational resource in quantum metrology~\cite{Giovannetti2006quantum}, quantum clock~\cite{Janzing2003quasi}, and work extraction~\cite{lostaglio_description_2015}.
In quantum thermodynamics, the presence of coherence is considered as the main source of difference between semiclassical and quantum setups~\cite{lostaglio2019introductory}.
Under the presence of a conserved quantity such as Hamiltonian, one is restricted to the operations that cannot create coherence. 
These operations, known as \emph{covariant} operations, are subject to many restrictions originating from the superselection rule~\cite{Bartlett2007reference,marvian_coherence_2020,Ozawa2002conservative,Tajima2018uncertainty,Takagi2020universal,Chiribella2021fundamental,tajima2021symmetry}, while preshared coherent states can lift their operational capability~\cite{Marvian2008building,Aberg2014catalytic,Tajima2020coherence}.
This motivates us to  obtain a precise understanding of how one could quantify and efficiently manipulate coherence, for which a resource-theoretic approach has been proven useful~\cite{Gour2008resource,marvian_extending_2014,marvian2012symmetry}.

Characterizing the possible state transformations under given accessible operations is a central problem in any operational setting with physical restrictions.
To understand the fundamental resource transformability, one needs to consider an ancillary system serving as a \emph{catalyst}, which keeps its form at the end of the transformation.  
Several possible scenarios for catalytic transformations have been proposed.
The first scenario considers an \emph{uncorrelated catalyst} $\tau$ that enables the transformation from $\rho\otimes\tau$ to $\rho'\otimes\tau$~\cite{Jonathan1999entanglement,Duan2005multiple,Turgut2007catalytic,aubrun_catalytic_2008,Brandao2015second,GOUR2015resource,Bu2016catalytic,Ng2015limits,Wilming2017thirdlaw,LipkaBartosik2021all}.
Although uncorrelated catalysts can enhance state transformation in some settings such as entanglement theory~\cite{Jonathan1999entanglement,Duan2005multiple} and quantum thermodynamics~\cite{Brandao2015second,GOUR2015resource}, any pure uncorrelated catalyst fails to change the power of coherence transformation by covariant operations~\cite{Marvian2013theory,Ding2021amplifying}. 
The second scenario extends the uncorrelated catalysts by allowing correlation between the system and the catalytic system at the end of the protocol, where we consider a transformation from $\rho\otimes\tau$ to $\tilde\rho_{SC}$ such that $\Tr_C\tilde\rho_{SC}=\rho'$ and $\Tr_S\tilde\rho_{SC}=\tau$, in which we call $\tau$ a \emph{correlated catalyst}~\cite{Muller2016generalization,Wilming2017axiomatic,Muller2018correlating,Boes2019vonNeumann,wilming2020entropy,Boes2020passingfluctuation,Shiraishi2021quantum,kondra2021catalytic,lipkabartosik2021catalytic,Boes2018catalytic}.
The power of correlated catalysts in covariant operations was discussed in terms of coherence broadcasting, where it was shown that correlated catalysts do not allow covariant operations to create finite coherence from zero coherence~\cite{Lostaglio2019coherence,Marvian2019nobroadcasting}.

These observations on the limitations of catalysts in coherence transformation motivate us to investigate other forms of catalysts that could enhance covariant operations.
An interesting setting was offered in quantum thermodynamics.  
Lostaglio et al.~\cite{Lostaglio2015stochastic} considered transformations with multiple catalysts where correlation can be present \emph{among} the catalysts at the end of the transformation, i.e., from $\rho\otimes\tau_{C^{(0)}}\dots\otimes\tau_{C^{(K-1)}}$ to $\rho'\otimes\tau_{C^{(0)}\dots C^{(K-1)}}$ while the marginal state of $\tau_{C^{(0)}\dots C^{(K-1)}}$ on each catalytic system $C^{(j)}$ remains as the original catalyst $\tau_{C^{(j)}}$.
They showed that quasiclassical transformations by thermal operations~\cite{horodecki_fundamental_2013} in this form are characterized solely by the free energy, surpassing the enhancement provided by uncorrelated catalysts~\cite{Brandao2015second}. 
Although the perfect reusability is generally lost due to the correlation generated among the final state of the catalysts, we follow the terminology in Ref.~\cite{Wilming2017axiomatic} and call such a finite set of states $\otimes_{i=0}^{K-1}\tau_{C^{(j)}}$ \emph{marginal catalysts}.
Characterizing the capability of covariant operations with marginal catalysts will provide insights into an ultimate coherence manipulability, as well as differences in operational capability of covariant operations and thermal operations, the latter of which is a subclass of the former.
Although significant progress has been made for qubit coherence transformation~\cite{Ding2021amplifying}, the potential of marginal catalysts in general coherence transformation has still been left unclear.

Here, we show that correlation among catalysts can completely remove the aforementioned limitations and even provide unlimited power to coherence manipulation. 
We prove that covariant operations assisted by marginal catalysts enable \emph{any} state transformations with arbitrary precision, making a high contrast to coherence transformation with the other catalytic settings. 
Furthermore, we discuss the underlying mechanism of this phenomenon from the viewpoint of general resource theories of quantum states~\cite{Chitambar2019quantum,Horodecki2013quantumness,Brandao2015reversible,Liu2017resource,Anshu2018quantifying,Regula2017convex,Takagi2019operational,Takagi2019general,Uola2019quantifying,Liu2019oneshot,Regula2020benchmarking,Fang2020nogo,Takagi2020universal,Kuroiwa2020generalquantum,boes2020variance,Zhou2020general}.
We show that an arbitrary state transformation is forbidden in a wide class of resource theories, establishing the peculiarity of quantum coherence among other quantum resources.
We also relate single-shot catalytic transformations to the asymptotic transformation in general resource theories and exactly characterize feasible state transformations for several important settings such as quantum thermodynamics, entanglement, and speakable coherence~\cite{Baumgratz2014quantifying,Marvian2016speakable} with the resource measures based on the relative entropy~\cite{Vedral2002role}, offering them with an operational meaning in terms of extended classes of single-shot catalytic transformations.


\textit{\textbf{Arbitrary state transformation.}}
---
For an arbitrary system $X$ with dimension $d_X$, let $\mD(X)$ be the set of quantum states defined in $X$ and $H_X=\sum_{i=0}^{d_X-1}E_{X,i} \dm{i}_X$ be its Hamiltonian where $\ket{i}_X$ is an energy eigenstate. 
When multiple systems $X_0,X_1,\dots,X_{N-1}$ are involved, we consider the total Hamiltonian over the systems in the additive form as $H_{X_0\dots X_{N-1}}=\sum_{i=0}^{N-1}H_{X_i}\otimes\mbI_{\bar{i}}$ where $\bar{i}$ refers to the systems other than the $i$\,th system.
Coherence between eigenstates with distinct energies can be quantitatively analyzed in the resource theory of asymmetry with U(1) group~\footnote{Although the resource theory of asymmetry can be defined for a general symmetry group~\cite{Gour2008resource}, in this manuscript we focus on the case of U(1) symmetry, which concerns the phase covariance.}.
Resource theories are frameworks accounting for the quantification and manipulation of precious quantities with respect to freely accessible quantum states and dynamics under given physical settings~\cite{Chitambar2019quantum}. 
The resource theory of asymmetry considers states without coherence, i.e., invariant under time translation, as free states and \emph{covariant} channels as free operations.
We call a channel $\mE:\mD(A)\to \mD(B)$ covariant if its action is invariant under time translation, i.e., $e^{-iH_B t}\mE(\rho)e^{iH_B t}=\mE\left(e^{-iH_At}\rho e^{iH_A t}\right)$ for any $\rho$ and $t$. 
Importantly, covariant operations cannot create coherence from incoherent states, making coherence a precious quantum resource under the situation where only covariant operations are accessible. 
Such a situation arises when energy-conserving dynamics are concerned. 
It is known that a map $\mE$ is covariant if and only if it can be implemented by an energy-conserving unitary $U_{SE}$ satisfying $[U_{SE},H_{SE}]=0$ as $\mE(\cdot)=\Tr_E[U_{SE}(\cdot\otimes\sigma) U_{SE}^\dagger]$ where $\sigma$ is an ancillary incoherent state~\cite{Keyl1999optimal,marvian2012symmetry}. 
If the ancillary state is restricted to the Gibbs state in $E$, the channels in this form coincide with the thermal operations~\cite{horodecki_fundamental_2013}. 
Therefore, covariant operations can be seen as an operation that focuses on the coherence part of the resource in quantum thermodynamics, and clarifying the difference in operational power between covariant operations and thermal operations under the same catalytic setting will help pinpoint the roles played by classical athermality and quantum coherence~\cite{Lostaglio2015quantum}.

We first formally define the marginal-catalytic covariant transformation as follows. (See also Fig.~\ref{fig:setting}.)

\begin{defn}\label{def:marginal}
$\rho\in\mD(S)$ is transformable to $\rho'\in\mD(S')$ by a marginal-catalytic covariant transformation if there exists a constant $K$ and a state $\otimes_{j=0}^{K-1}\tau_{C^{(j)}}$ in a finite-dimensional system $\otimes_{j=0}^{K-1}C^{(j)}$ and a covariant operation $\mE:\mD(SC^{(0)}\dots C^{(K-1)})\rightarrow \mD(S'C^{(0)}\dots C^{(K-1)})$ such that 
\bal
 \mE(\rho\otimes\tau_{C^{(0)}}\dots\otimes\tau_{C^{(K-1)}})&= \rho'\otimes \tau_{C^{(0)}\dots C^{(K-1)}}\\ \Tr_{\overline{C^{(j)}}}\tau_{C^{(0)}\dots C^{(K-1)}}&= \tau_{C^{(j)}},\ \forall j,
 \label{eq:marignal catalytic def}
\eal
where $\Tr_{\overline{X}}$ denotes the partial trace over the systems other than $X$.  
\end{defn}

\begin{figure}
    \centering
    \includegraphics[width=0.5\textwidth]{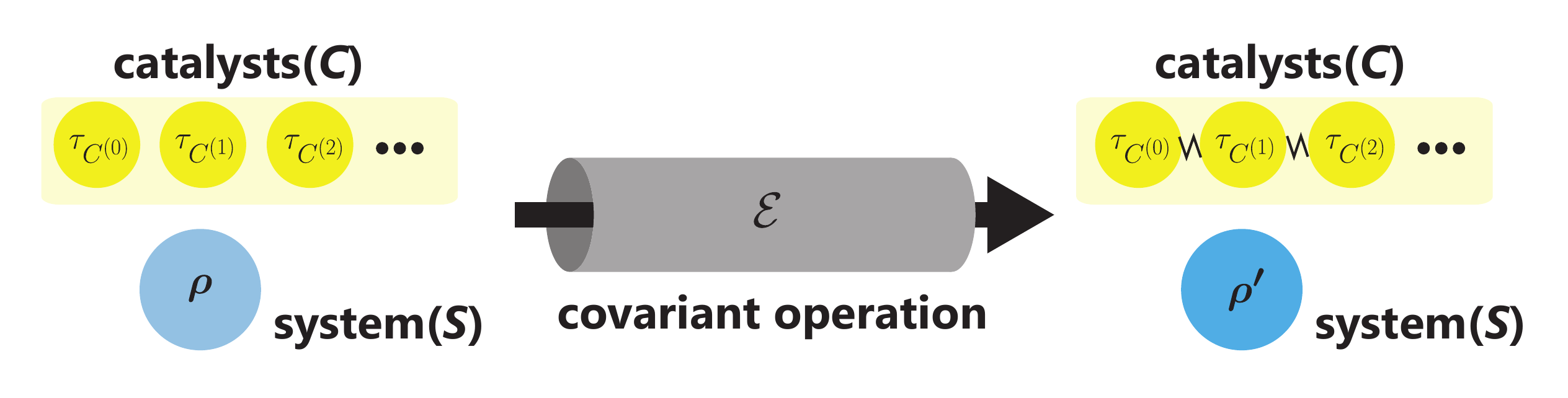}
    \caption{Schematic of marginal-catalytic covariant transformations.
    Each catalyst should get back to the original state, while it can correlate with the system and other catalysts.}
    \label{fig:setting}
\end{figure}

Marginal catalysts can be seen as an extension of a catalytic transformation in the sense that the final state keeps some properties of the initial state in an exact form.
Our main focus here is not to keep the repeatable property of catalytic transformations but to investigate how the state transformability could be enhanced by the nontrivial change of the setting regarding the correlation.
Nevertheless, we can also motivate this specific setting operationally; although the final catalyst as a whole is not reusable in the next round, \emph{some parts of it} can be reused multiple times without the degradation of performance in the desired state transformation. We discuss this \emph{partial reusability} of marginal catalysts in the Supplemental Material~\footnote{See the Supplemental Material for the full proofs of the main results and other related discussions, which includes Ref.~\cite{olver2010nist}}. 

\nocite{olver2010nist}

Our main result shows that, despite the apparent limitations in catalytic coherence transformation with uncorrelated and correlated catalysts~\cite{Marvian2013theory,Ding2021amplifying,Marvian2019nobroadcasting,Lostaglio2019coherence}, marginal catalysts can provide extraordinary power---in fact, any state transformation can be accomplished by a marginal-catalytic covariant transformation with arbitrary accuracy.

\begin{thm} \label{thm:transform catalyst}
For any $\rho\in\mD(S)$, $\rho'\in\mD(S')$ and $\epsilon>0$, $\rho$ can be transformed to a state $\rho'_\epsilon\in\mD(S')$ such that $\frac{1}{2}\|\rho'-\rho'_\epsilon\|_1\leq \epsilon$ by a marginal-catalytic covariant transformation.
\end{thm}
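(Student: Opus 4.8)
The plan is to exhibit, for a given target $\rho'\in\mD(S')$ and precision $\epsilon$, an explicit family of marginal catalysts together with an explicit energy-conserving unitary that approximately prepares $\rho'$ while returning each catalyst to its marginal. The natural starting point is to reduce to the case $\rho=\dm{0}_S$, the lowest-energy incoherent state: since covariant operations can always append and discard incoherent ancillas, and since any input $\rho$ can be coarsely processed, it suffices to show that a fixed incoherent seed can be turned into (an approximation of) any $\rho'$. So the heart of the matter is \emph{embezzling coherence from correlations}: I want catalysts whose individual marginals are some fixed states but whose joint state can absorb the coherence and ``athermality'' deficit needed to build $\rho'$ on $S'$.

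The key construction I would pursue mimics the van Dam--Hayden embezzling state, but distributed across $K$ catalysts so that no single marginal changes. Concretely, take $K$ identical catalytic systems $C^{(0)},\dots,C^{(K-1)}$, each a ``ladder'' with a broad, slowly-varying energy distribution (e.g.\ an approximate Gibbs-like or geometrically-weighted superposition across many energy levels), so that each marginal $\tau_{C^{(j)}}$ is essentially flat and nearly invariant under small shifts of energy/population. The target state $\rho'$ (after diagonalization and a covariant dephasing bookkeeping) is decomposed as a convex combination of pure states, and the coherent amplitudes and energy bookkeeping are ``paid for'' collectively by a tiny reshuffling of the joint catalyst state; because the reshuffling is spread over $K\to\infty$ catalysts, each individual marginal is disturbed by $O(1/K)$ or less, which can be driven below any threshold, and then a final correction/repair step restores each marginal \emph{exactly} at the cost of leaving correlations among the catalysts (which is precisely what Definition~\ref{def:marginal} permits). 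The unitary implementing all of this must be checked to commute with the total additive Hamiltonian $H_{SC^{(0)}\cdots C^{(K-1)}}$; this is arranged by making every elementary move an energy-preserving swap between matched pairs of levels (system-up/catalyst-down and vice versa), so covariance is manifest.

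The main obstacle, and where the real work lies, is the exactness requirement $\Tr_{\overline{C^{(j)}}}\tau_{C^{(0)}\cdots C^{(K-1)}}=\tau_{C^{(j)}}$ for every $j$ \emph{simultaneously} while only the state on $S'$ is allowed to be $\epsilon$-approximate. Getting each marginal approximately right is the embezzling-style estimate; getting all of them \emph{exactly} right is a fixed-point/averaging argument. I expect to handle it by a symmetrization trick: start from a candidate joint catalyst state whose marginals are $\epsilon$-close to the intended $\tau_{C^{(j)}}$, then either (i) average the protocol over the permutation group of the $K$ catalysts and a suitable set of local energy-preserving corrections so the resulting map has exactly the right marginal structure by symmetry, or (ii) absorb the residual marginal discrepancy into yet another layer of catalysts, iterating so the error telescopes to zero. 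A subtle point to verify carefully is that these corrections themselves remain covariant and do not reintroduce error on $S'$ beyond $\epsilon$; bounding that back-action via continuity of the trace norm under the (energy-preserving) corrections is the technical crux. Finally I would record the dimension and energy scaling of the catalysts needed as a function of $\epsilon$ and the spectra of $H_{S'}$, confirming finiteness of $K$ and of each $\dim C^{(j)}$, as required by Definition~\ref{def:marginal}.
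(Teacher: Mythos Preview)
Your reduction to an incoherent seed is fine and matches the paper, but the core mechanism you propose has a genuine gap. You frame the construction as an embezzling-style argument: spread a small disturbance over $K$ catalysts so each marginal moves by $O(1/K)$, then ``repair'' the marginals exactly by symmetrization or by layering further catalysts. Neither repair mechanism is made concrete, and neither obviously works. Averaging over the permutation group of the catalysts does not force each marginal back to $\tau_{C^{(j)}}$ exactly unless you already know the marginals are correct; and the ``absorb the residual into another layer, iterate'' idea either requires infinitely many catalysts (violating the finiteness in Definition~\ref{def:marginal}) or a geometric contraction of the marginal error that you have not established. The paper itself stresses that this result is \emph{not} embezzlement: the marginals must be returned \emph{exactly}, and an approximate-then-fix scheme is precisely what the definition is designed to rule out.

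What the paper actually does is avoid any repair step altogether. The engine is a specific two-level subroutine (due to Ding et al., with Kraus operators \eqref{eq:two-level amplification Kraus}) that takes $\Sigma(\eta)\otimes\Gamma(\eta)$ to a state whose marginal on the catalyst is \emph{exactly} $\Gamma(\eta)$ while the system coherence strictly increases, $\eta\mapsto\eta(25-\eta^2)/24$. This exactness is an algebraic identity, not an estimate. Step~1 runs one round of this subroutine and siphons the gain into an ancilla $R_i$ by an energy-preserving unitary, restoring the first catalyst exactly; Step~2 iterates the subroutine with a fresh catalyst $\Gamma(\eta_j)$ at each round to drive the ancilla arbitrarily close to $\ket{+}$; Step~3 uses many such $\ket{+}^{\otimes L}$ copies as a coherent reference frame (binomial energy statistics) to approximate an arbitrary unitary on $S'$ by a covariant channel, with the reference-frame error controlled by $\Tr[\Delta\,\dm{+}^{\otimes L}]\to 1$. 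The only approximation lives on $S'$; every catalyst marginal is exact at every stage by construction. Your proposal is missing exactly this ingredient: a covariant primitive that amplifies coherence while preserving the catalyst marginal identically.
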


We sketch our proof in a later section, while deferring the detailed proof to the Supplemental Material. 

Theorem~\ref{thm:transform catalyst} implies that marginal catalysts can trivialize coherence transformations, fully generalizing the result in Ref.~\cite{Ding2021amplifying} established for the qubit state transformations to those involving arbitrary Hilbert spaces of finite dimensions.
Notably, the result can be extended to the implementation of an arbitrary \emph{quantum channel} (see the Supplemental Material). 

Our result makes a high contrast to quantum thermodynamics, in which state transformations by marginal-catalytic thermal operations respect the ordering of the free energy~\cite{Lostaglio2015stochastic}. 
A related phenomenon is known as embezzlement~\cite{vanDam2003universal}, where a negligibly small error in a catalyst enables an arbitrary transformation. 
We stress that marginal-catalytic transformations recover the marginal states \emph{exactly} and are fundamentally different from the mechanism of the well-known embezzlement. 
In fact, as discussed below, the trivialization of state transformations by marginal catalysts is an unusual phenomenon, which shows a clear contrast to the embezzlement seen in a broad class of resource theories from entanglement~\cite{vanDam2003universal} to quantum thermodynamics~\cite{Brandao2015second}.


\textit{\textbf{Comparison to other quantum resource theories.}}
---
It may appear odd that one can create unbounded coherence in the main system while keeping the reduced states of the catalysts intact. 
To get insights into this phenomenon, let us consider whether marginal catalysts could provide similar enhancement in other quantum resource theories.
Each resource theory is equipped with a set $\mF$ of free states and a set $\mO_\mF$ of free operations~\cite{Chitambar2019quantum}. 
For given these sets, one can define a resource measure $\mfR$, which evaluates zero for any free state, i.e., $\mfR(\sigma)=0$ for any $\sigma\in\mF$, and does not increase under free operations, i.e., $\mfR(\mE(\rho))\leq \mfR(\rho)$ for any $\rho$ and for any $\mE\in\mO_\mF$.
We particularly call it superadditive if $\mfR(\rho_{12})\geq \mfR(\Tr_2[\rho_{12}])+\mfR(\Tr_1[\rho_{12}])$ for any state $\rho_{12}\in\mD(S_1\otimes S_2)$ and tensor-product additive if $\mfR(\rho_1\otimes\rho_2)=\mfR(\rho_1)+\mfR(\rho_2)$ for any $\rho_1$ and $\rho_2$. 

The setup of catalytic transformations can be extended to any resource theory. 
We say that $\rho$ is transformable to $\rho'$ by a  \emph{correlated-catalytic free transformation} if there exists a finite-dimensional catalyst $\tau$ such that $\rho\otimes\tau$ can be transformed to $\tilde\rho_{SC}$ with $\Tr_C\tilde\rho_{SC}=\rho'$,  $\Tr_S\tilde\rho_{SC}=\tau$ by a free operation.
Then, we can show that any resource measure satisfying the above two properties remains a valid resource measure under the two catalytic transformations involving correlation. (See the Supplemental Material for a proof.)

\begin{pro}\label{pro:general monotone marginal catalytic}
For any given $\mF$ and $\mO_\mF$, suppose that a resource measure $\mfR$ satisfies the superaddtivity and the tensor-product additivity.
Then, if $\rho$ is transformable to $\rho'$ by a marginal-catalytic or correlated-catalytic free transformation, $\mfR(\rho)\geq \mfR(\rho')$ holds.
\end{pro}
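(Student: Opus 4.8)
The plan is to combine the three defining properties of $\mfR$---monotonicity under free operations, tensor-product additivity, and superadditivity---arranging the inequalities so that the catalyst contributions cancel. Consider first the correlated-catalytic case, where a free operation $\mE$ sends $\rho\otimes\tau$ to $\tilde\rho_{SC}$ with $\Tr_C\tilde\rho_{SC}=\rho'$ and $\Tr_S\tilde\rho_{SC}=\tau$. Monotonicity gives $\mfR(\rho\otimes\tau)\geq\mfR(\tilde\rho_{SC})$; tensor-product additivity rewrites the left-hand side as $\mfR(\rho)+\mfR(\tau)$; and superadditivity bounds the right-hand side below by $\mfR(\Tr_C\tilde\rho_{SC})+\mfR(\Tr_S\tilde\rho_{SC})=\mfR(\rho')+\mfR(\tau)$. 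Cancelling the common term $\mfR(\tau)$---finite because $\tau$ lives on a finite-dimensional system---yields $\mfR(\rho)\geq\mfR(\rho')$.

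For the marginal-catalytic case I would run the same argument with $K$ catalysts, using the two additivity properties iteratively. Writing $\tau_C=\tau_{C^{(0)}\dots C^{(K-1)}}$, monotonicity gives $\mfR\big(\rho\otimes\bigotimes_{j=0}^{K-1}\tau_{C^{(j)}}\big)\geq\mfR(\rho'\otimes\tau_C)$. Repeated use of tensor-product additivity turns the left-hand side into $\mfR(\rho)+\sum_{j=0}^{K-1}\mfR(\tau_{C^{(j)}})$ and splits the right-hand side into $\mfR(\rho')+\mfR(\tau_C)$. It then remains to lower-bound $\mfR(\tau_C)$ by $\sum_{j=0}^{K-1}\mfR(\tau_{C^{(j)}})$: viewing $\tau_C$ as a bipartite state across $C^{(0)}$ and $C^{(1)}\dots C^{(K-1)}$ and invoking superadditivity gives $\mfR(\tau_C)\geq\mfR(\tau_{C^{(0)}})+\mfR(\tau_{C^{(1)}\dots C^{(K-1)}})$, where I used the marginal constraint $\Tr_{\overline{C^{(0)}}}\tau_C=\tau_{C^{(0)}}$; iterating this split on the remaining systems---whose single-system marginals are again the original catalysts by consistency of partial traces over disjoint subsystems---produces the claimed bound. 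Substituting back and cancelling $\sum_{j=0}^{K-1}\mfR(\tau_{C^{(j)}})$ gives $\mfR(\rho)\geq\mfR(\rho')$.

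The only genuine subtlety is the cancellation step, which requires each $\mfR(\tau_{C^{(j)}})$ to be finite; I would justify this by noting that the catalysts are finite-dimensional and that the resource measures of interest take finite values on such states (one may simply restrict the statement to such measures). A minor point to verify is the legitimacy of iterating superadditivity, i.e.\ that after tracing out $C^{(0)}$ the residual state on $C^{(1)}\dots C^{(K-1)}$ still has each single-system marginal equal to the corresponding $\tau_{C^{(j)}}$---this is immediate from commutativity of partial traces. Beyond these bookkeeping checks, the argument is a short chain of inequalities with no further ideas required.
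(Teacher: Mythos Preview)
Your argument is correct and follows essentially the same route as the paper: apply tensor-product additivity to the input, monotonicity under the free operation, and then superadditivity (iterated across the catalyst subsystems in the marginal case) to the output, so that the catalyst contributions cancel. The paper presents the marginal-catalytic chain first and simply remarks that the correlated case is analogous; your additional care about finiteness of $\mfR(\tau_{C^{(j)}})$ and consistency of marginals under iterated partial traces is sound bookkeeping that the paper leaves implicit.
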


We remark that a related observation was made in the context of quantum thermodynamics~\cite{Wilming2017axiomatic}.
This puts a severe constraint on the possibility of arbitrary state transformation. 
If there exists even a \emph{single} resource measure satisfying the superadditivity and the tensor-product additivity, then marginal catalysts do not enable an arbitrary state transformation as long as the resource measure is faithful, i.e., any non-free state takes a non-zero value. (See also Refs.~\cite{takagi_skew_2019,Marvian2019nobroadcasting} and discussion below.)
In fact, one can find such measures in many resource theories, including quantum thermodynamics~\cite{Wilming2017axiomatic},  entanglement~\cite{Christandl2004squashed,Alicki2004continuity}, and speakable coherence (superposition between given orthogonal states)~\cite{Baumgratz2014quantifying, xi_quantum_2015,Marvian2016speakable}, prohibiting the anomalous resource transformation with marginal or correlated catalysts.

On the other hand, Theorem~\ref{thm:transform catalyst} and Proposition~\ref{pro:general monotone marginal catalytic} imply that there \emph{never} exists a coherence measure that is superadditive, tensor-product additive, and faithful.
Our results parallel previous observations; recent analytic proofs of the violation of superadditivity of coherence measures employ covariant operations that can amplify the sum of local coherence indefinitely~\cite{takagi_skew_2019,Marvian2019nobroadcasting}. 
Examples of tensor-product additive and faithful coherence measures include the Wigner-Yanase skew information~\cite{Wigner1963information,marvian_extending_2014} and other metric-adjusted skew informations~\cite{Hansen2008metric,Zhang2017detecting}, which indeed violate the superadditivity~\cite{hansen_wigner-yanase_2007,seiringer_failure_2007,takagi_skew_2019}. 
More generally, it was shown that any faithful measure of asymmetry cannot be superadditive~\cite{Marvian2019nobroadcasting}.  
These results together with Theorem~\ref{thm:transform catalyst} and Proposition~\ref{pro:general monotone marginal catalytic} indicate an intimate connection between the anomalous coherence amplification and the violation of the superadditivity of coherence measures.

Besides the necessary conditions established in Proposition~\ref{pro:general monotone marginal catalytic}, we can also formulate sufficient conditions using a general method of converting asymptotic transformations to one-shot correlated-catalytic transformations~\cite{Shiraishi2021quantum}. (See the Supplemental Material for a proof.)

\begin{pro}\label{pro:sufficient}
For any given $\mF$ and $\mO_\mF$, suppose that $\mO_\mF$ includes the relabeling of the classical register and free operations conditioned on the classical register.
Then, if $\rho$ is asymptotically transformable to $\rho'$, there exists a free transformation from $\rho$ to $\rho'$ with a correlated catalyst as well as marginal catalysts with an arbitrarily small error. 
\end{pro}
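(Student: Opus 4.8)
\emph{Proof strategy.}---The plan is to import the asymptotic-to-one-shot conversion technique of Ref.~\cite{Shiraishi2021quantum}, developed there for quantum thermodynamics, and to check that it survives the passage to a general resource theory precisely because of the stated closure properties of $\mO_\mF$. First I would unpack the hypothesis: ``$\rho$ is asymptotically transformable to $\rho'$'' means that for every $\delta>0$ there exist an integer $n$ and a free channel $\mE_n\in\mO_\mF$ with $\frac{1}{2}\|\mE_n(\rho^{\otimes n})-(\rho')^{\otimes n}\|_1\le\delta$; fix such an $n$ with $\delta$ small compared to $\epsilon$.

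Next I would construct the catalyst (finite-dimensional, as required). Introduce a classical register $R$ uniformly distributed over $\{0,1,\dots,n-1\}$ together with $n-1$ quantum registers (jointly denoted $C$), and set
\begin{equation}
\tau_{RC}\;=\;\frac{1}{n}\sum_{j=0}^{n-1}P_j\otimes(\rho')^{\otimes j}\otimes\rho^{\otimes(n-1-j)},
\end{equation}
where $P_j$ projects onto the value $j$ of $R$ and the $j$ copies of $\rho'$ occupy the first $j$ registers of $C$. Thus, conditioned on $R=j$, the incoming system together with $C$ carries exactly $j$ copies of $\rho'$ and $n-j$ copies of $\rho$. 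The catalytic free operation acts conditionally on $j$: if $j\ge1$ it relabels the last $\rho'$-register of $C$ as the output system and the incoming system (in state $\rho$) as that register---a free relabeling of isomorphic subsystems---and sends $R\mapsto j-1$; if $j=0$, all $n$ systems are in state $\rho$, so it applies $\mE_n$ to them, designates one output register as the output system and the remaining $n-1$ as the new $C$, and sends $R\mapsto n-1$. Under the hypothesis on $\mO_\mF$ (register relabeling and register-conditioned free operations), this composite map is free.

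Then I would verify the bookkeeping. Since $j\mapsto j-1\bmod n$ is a cyclic permutation, $R$ returns \emph{exactly} to the uniform distribution, and for each branch $j\ge1$ the census $(j,n-1-j)$ of $C$ maps to $(j-1,n-j)$, precisely the configuration $\tau_{RC}$ prescribes for $R=j-1$, while the output system is exactly $\rho'$. Only the branch $j=0$, of probability $1/n$, incurs error, and a short triangle-inequality estimate bounds the deviation of the total output from $\rho'\otimes\tau_{RC}$ by $O(\delta/n)\le\epsilon$; this establishes the correlated-catalytic transformation. For the marginal-catalytic version in the sense of Definition~\ref{def:marginal}, I would declare $R$ and each individual register of $C$ to be separate catalysts---with marginals, respectively, the uniform distribution and a fixed mixture of $\rho$ and $\rho'$---and note that the argument recovers each marginal up to $O(\epsilon)$, with the residual pushable entirely into the target state (so that the marginals close exactly) by the refinement already contained in Ref.~\cite{Shiraishi2021quantum}.

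The hard part is not any individual estimate but guaranteeing, in a theory-independent way, that every elementary move stays inside $\mO_\mF$: this is exactly why one needs $\mO_\mF$ to contain register relabelings and register-conditioned free operations, and why the ``shift'' of the staircase must be implemented by relabeling isomorphic subsystems rather than by a physical swap, which need not be free in a generic resource theory. The other delicate point---the combinatorial argument that simultaneously reconstitutes every marginal of the correlated catalyst while keeping the target error below $\epsilon$---I would handle by invoking the construction of Ref.~\cite{Shiraishi2021quantum} rather than redoing it from scratch.
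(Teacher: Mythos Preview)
Your correlated-catalyst argument follows the right template from Ref.~\cite{Shiraishi2021quantum}, but the catalyst you write down is only restored approximately, whereas the definition (and the paper's reading of the statement) demands exact restoration with the error confined to the output. With $(\rho')^{\otimes j}$ in the branches of $\tau_{RC}$, the $j=0$ branch deposits the marginals of $\Xi\coloneqq\mE_n(\rho^{\otimes n})$ rather than $(\rho')^{\otimes(n-1)}$, so $\Tr_{S'}$ of the final state differs from $\tau_{RC}$ by $O(\delta/n)$. You call the fix a ``refinement'', but it is the main construction: the paper builds the catalyst from the \emph{actual} output, setting $\Xi_j=\Tr_{j+1,\dots,n}\Xi$ and $\tau=\frac{1}{n}\sum_{k=1}^{n}\rho^{\otimes(k-1)}\otimes\Xi_{n-k}\otimes|k\rangle\langle k|$, applies $\mE_n$ only on the branch $R=n$, cyclically relabels $R$, and cyclically permutes the $n$ quantum registers. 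This restores $\tau$ \emph{exactly}, and the output marginal $\frac{1}{n}\sum_k\Tr_{\bar k}\Xi$ is $\epsilon$-close to $\rho'$ by monotonicity of trace distance under partial trace. (Your concern about physical swaps is unnecessary: permuting isomorphic subsystems is free in any standard resource theory, and the paper invokes this explicitly.)

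Your marginal-catalyst step has a genuine gap. Definition~\ref{def:marginal} requires the initial catalyst to be a \emph{product} $\bigotimes_j\tau_{C^{(j)}}$; you cannot retroactively declare $R$ and each $C$-register to be separate catalysts, since $\tau_{RC}$ is classically correlated, and if you actually initialize them in product form (each quantum register in its marginal mixture of $\rho$ and $\rho'$) the conditional-swap protocol no longer outputs $\rho'$. The paper's route is different and much simpler: take the whole correlated catalyst $\tau$ as \emph{one} catalytic subsystem, adjoin a \emph{second} catalyst prepared in the approximate target $\rho'_\epsilon$, run the correlated-catalytic protocol, and then swap $S'$ with the second catalyst. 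This leaves $\rho'_\epsilon$ on $S'$ uncorrelated from the catalysts, recovers both catalytic marginals ($\tau$ and $\rho'_\epsilon$) exactly, and places the residual correlation only between the two catalysts---a marginal-catalytic transformation with $K=2$.
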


Proposition~\ref{pro:sufficient} shows that sufficient conditions for asymptotic transformations are directly carried over to single-shot catalytic transformations.
This particularly implies that, in a general class of convex resource theories, the regularized relative entropy measure provides a sufficient condition for these single-shot catalytic transformations under asymptotically resource non-generating operations, given the generalized quantum Stein's lemma holds~\cite{Brandao2015reversible,Brandao2010generalization} (see also Ref.~\cite{Berta2022on_a_gap} for the recent argument about the incompleteness in the proof of the generalized quantum Stein's lemma).

Combining Propositions~\ref{pro:general monotone marginal catalytic} and \ref{pro:sufficient}, we arrive at the complete characterizations of marginal- and correlated-catalytic free transformations for various settings in which the resource measures governing asymptotic transformations satisfy the tensor-product additivity and superadditivity.
These include several well-known relative entropy based measures, such as the free energy in quantum thermodynamics with Gibbs-preserving operations~\cite{matsumoto2010reverse,Shiraishi2021quantum,Regula2020benchmarking,Zhou2020general}, the entanglement entropy with LOCC pure state transformations~\cite{Bennett1996concentrating,lipkabartosik2021catalytic,kondra2021catalytic}, and the relative entropy of speakable coherence with several free operations~\cite{Winter2016operational,Zhao2018oneshot,Regula2018oneshot,Chitambar2018dephasing}. 
Notably, Proposition~\ref{pro:general monotone marginal catalytic} and \ref{pro:sufficient} imply the equivalence in the power of correlated and marginal catalysts for these scenarios.  

\textit{\textbf{Correlated-catalytic covariant transformations.}}
---
Although Theorem~\ref{thm:transform catalyst} reveals the exceptional power of marginal catalysts, the power of correlated catalysts in coherence transformation still remains elusive. 
In particular, when the initial state has non-zero coherence, neither the coherence no-broadcasting theorem~\cite{Marvian2019nobroadcasting,Lostaglio2019coherence} nor Proposition~\ref{pro:general monotone marginal catalytic} prohibits preparing an arbitrary state. 
We conjecture that a broad class of transformations is possible with correlated catalysts under the presence of initial coherence.

\begin{conj}[Informal] \label{conj:transform catalyst}
Let $\mathcal{C}(\rho)$ be the set of energy differences for which $\rho$ possesses non-zero coherence. Then, for any states $\rho$ and $\rho'$, $\rho$ can be transformed to $\rho'$ with an arbitrarily small error by a correlated-catalytic covariant transformation if and only if every energy difference in $\mathcal{C}(\rho')$ can be written as a sum of integer multiples of energy differences in $\mathcal{C}(\rho)$.
\end{conj}

The idea behind this is that correlated-catalytic covariant transformations should be able to amplify and manipulate the coherence of the initial state to realize any degree of coherence for the energy differences that are combinations of the initial ones with non-zero coherence.   
Thus, if these energy differences cover those of the target state with non-zero coherence, $\rho$ should be transformable to $\rho'$ under a correlated-catalytic covariant transformation.

In the Supplemental Material, we present a precise statement of the conjecture and support it by proving the state transformability under a slightly larger class of catalytic covariant operations, together with several other observations.


\textit{\textbf{Proof sketch for Theorem~\ref{thm:transform catalyst}.}}
---
To prove our claim, it suffices to provide a protocol that prepares a final state from scratch with an arbitrarily small error.  
Our protocol makes use of the procedure introduced in Ref.~\cite{Ding2021amplifying} as a subroutine, which amplifies coherence in two-level systems using a correlated catalyst by a small amount (Fig.~\ref{fig:protocol1}.(a)). 
This can particularly bring a coherent state $\Sigma(\eta)\coloneqq(\mbI+\eta X)/2$ with $\eta>0$, $X\coloneqq \ketbra{0}{1}+\ketbra{1}{0}$ to another coherent state $\Sigma(\eta')$ with $\eta'>\eta$ using a catalyst $\Gamma(\eta)\coloneqq\frac{1}{2}\left(\mbI+\frac{\sqrt{3}\eta}{2}X+\frac{4-\eta^2}{6}Z\right)$, and sequential application of this protocol allows us to realize any coherent state on the $X$ axis of the Bloch sphere (excluding the pure state $\eta=1$) with marginal-catalytic covariant operations.
Although the authors of Ref.~\cite{Ding2021amplifying} claim that the whole sequence of amplification is a correlated-catalytic covariant transformation, their argument is, unfortunately, insufficient --- in fact, the total amplification process is marginal catalytic.
We extend detailed discussions about the two-level coherence amplification subroutine in the Supplemental Material.

Our protocol consists of three main steps (Figs.~\ref{fig:protocol1} and \ref{fig:protocol2}). 
The first step creates small coherence in an ancillary system, the second step amplifies this coherence and constructs coherent resource states, and the third step uses them to prepare the target state with a covariant operation.

\begin{figure}
    \centering
    \includegraphics[width=0.5\textwidth]{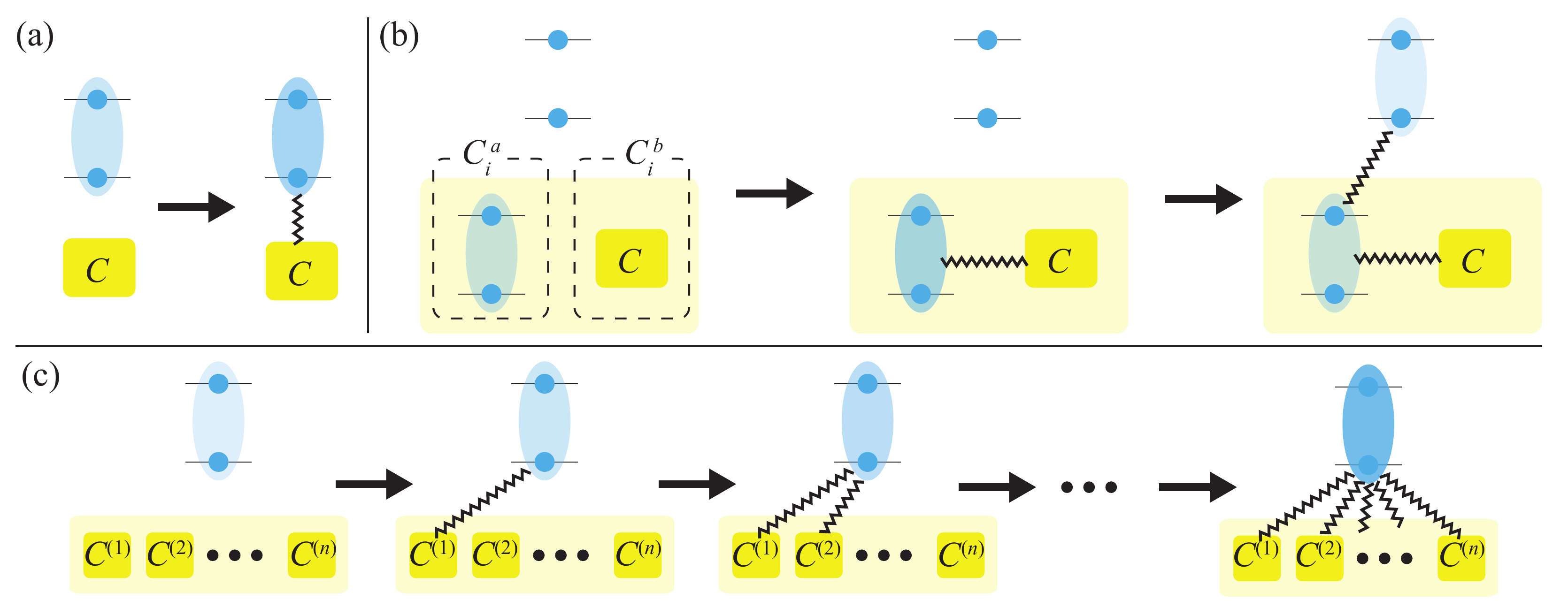}
    \caption{Schematics for Steps~1 and 2. (a) One cycle of the two-level coherence amplification subroutine. (b) We run the two-level amplification protocol to increase coherence in $C^{a}$ together with $C^{b}$. We transfer this increased coherence to $R$ and restore the state in $C^{a}$ to the original form. (c) We amplify the coherence generated in $R$ with many rounds of the two-level amplification protocol.}
    \label{fig:protocol1}
\end{figure}

\emph{Step~1: Creating small coherence. (Fig.~\ref{fig:protocol1}\,(b))}\quad  
We introduce ancillary system $R$ consisting of two-level subsystems $\{R_{i}\}_{i=1}^{d_{S'}-1}$ whose Hamiltonians reflect the spectrum of $H_{S'}$ as $H_{R_{i}}= (E_{S',i}-E_{S',j^\star})\dm{1}_{R_{i}}$ where $j^\star\in\{0,\dots,d_{S'}-1\}$ is an arbitrarily chosen integer independent of $i$.
We aim to prepare a coherent state $\Sigma(\eta)$ with $\eta>0$ for each $i$.
To this end, we introduce catalytic subsystems $C^{a}_{i}$ and $C^{b}_i$, both of which have the Hamiltonian $H_{R_{i}}$.
We prepare catalysts $\tau_i^{a}\coloneqq\Sigma(\eta)$ in $C_{i}^{a}$ and $\tau_i^{b}\coloneqq\Gamma(\eta)$ in $C^{b}_i$ for some $\eta$ with $0<\eta<1$. 
We apply a single round of the two-level coherence amplification over $\tau_i^{a}\otimes\tau_i^{b}$, which increases coherence in $C_i^{a}$ by a small amount while keeping the reduced state on $C_i^{b}$ unchanged.
We transfer the increased amount of coherence from $C_i^a$ to $R_i$ by applying a covariant unitary over $R_{i}C_i^{a}$, creating a non-zero coherence in $R_{i}$ while bringing the reduced state on $C_i^{a}$ back to $\tau_i^a$.

\emph{Step~2: Amplifying coherence.(Fig.~\ref{fig:protocol1}\,(c))}\quad   
We amplify this non-zero coherence generated in $R_i$ by the two-level coherence amplification using another set of catalysts prepared in $\otimes_{j=0}^{K-1}C_{i}^{(j)}$ with a large enough integer $K$. 
This prepares a state close to $\ket{+}:=(\ket{0}+\ket{1})/\sqrt{2}$ in $R_{i}$.

\emph{Step~3: Prepare the target state. (Fig.~\ref{fig:protocol2})}\quad
We repeat Steps~1 and 2 for $L(\gg 1)$ times to prepare a state close to $\ket{+}^{\otimes L}_{R_i}$ for each $i$, which is a superposition of energy eigenstates with weights according to the binomial distribution. 
By employing these states as ancillary coherent resource states, we can implement any unitary on $S'$ with arbitrary accuracy by a covariant operation~\cite{Aharonov1967charge,Ozawa2002conservative,Kitaev2004superselection,Bartlett2007reference,Marvian2008building,Tajima2020coherence, Aberg2014catalytic, korzekwa2016extraction}.
Since any pure state on $S'$ can be prepared by applying an appropriate unitary to an incoherent state $\ket{j^\star}_{S'}$, and any mixed state is realized by a probabilistic mixture of pure states, we can obtain the total state whose reduced state on $S'$ is $\rho'_\epsilon$.
Finally, the correlation between $S'$ and the catalytic system can be removed by using the technique employed in Ref.~\cite{Muller2016generalization}, where we start with the final state in another catalytic system and swap it with the marginal state created in $S'$.

\begin{figure}
    \centering
    \includegraphics[width=0.5\textwidth]{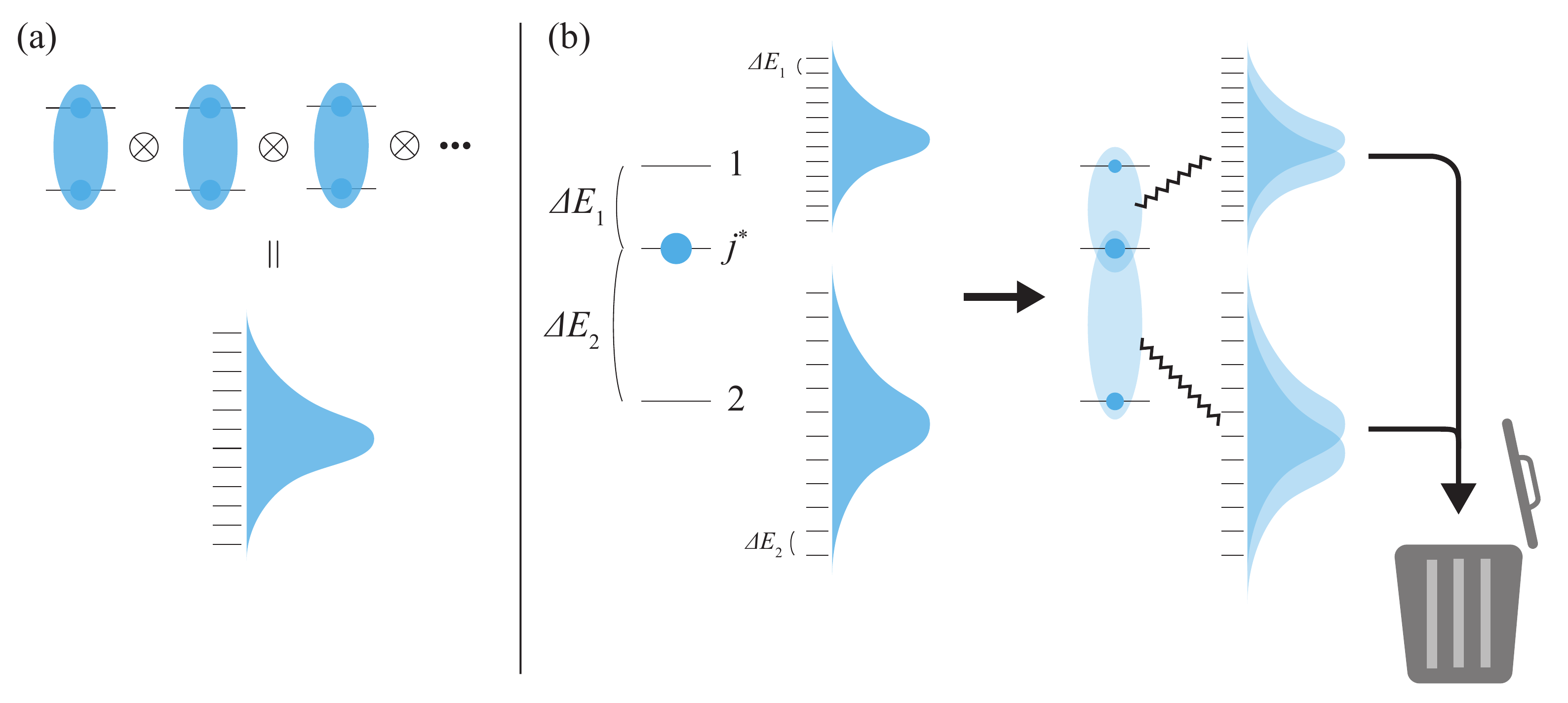}
    \caption{Schematics for Step~3. (a) Multiple copies of the resource state in $R$ created in Step~1 constitute a state with the binomially distributed energy statistics. (b) We use these coherent states as ancillary coherent resource states to assist the energy transition required for the desired unitary $V$. The error on the realized unitary from the desired one, $V$, is quantified by the overlap between the original resource states and the final resource states subject to an energy shift due to the backreaction, which can be made arbitrarily small by creating a sufficiently large resource state. These resource states are discarded at the end of the protocol.}
    \label{fig:protocol2}
\end{figure}

The accuracy of the whole protocol is determined by the errors in preparing highly coherent resource states in Step~2 and in approximating unitary in Step~3, both of which can be made arbitrarily small using finite-size catalysts, ensuring any target error $\epsilon>0$.
Also, since the required catalysts except for the final step in removing the correlation do not depend on the target state, they construct a universal family of catalysts applicable to any state transformation if an arbitrarily small correlation is allowed between the main system and catalytic systems.


\textit{\textbf{Conclusions.}}
---
We studied catalytic state transformation with marginal catalysts, which allow correlation among multiple catalyst states at the end of the transformation.
We showed that marginal catalysts provide exceptional power to coherence transformation, enabling any state transformation with arbitrarily small error.
To elucidate the peculiarity of how quantum coherence behaves in catalytic transformations, we compared it to other types of quantum resources by formulating conditions for catalytic state transformations from the perspective of resource quantifiers. 
We showed that such an anomalous state transformation is impossible in resources such as thermal nonequilibrium, entanglement, and speakable coherence, for which we exactly characterized state transformability with correlated and marginal catalysts in terms of relative entropy resource measures.

An intriguing future direction is to prove or disprove the conjecture on the power of correlated catalysts in coherence transformation, which will provide insights into another interesting problem in quantum thermodynamics, that is, whether the free energy solely determines the state transformability by thermal operations with correlated catalysts if an initial state has finite coherence. 
Answering this question will pave the way toward a fully general operational characterization of single-shot quantum thermodynamics.


\begin{acknowledgments}

\textit{Acknowledgments.}---
We thank Nelly Ng for discussions, Xueyuan Hu for comments on a preliminary version of the manuscript, and Hiroyasu Tajima for comments on the partial reusability of marginal catalysts. We also thank anonymous referees for their useful comments and suggestions.
R.T.\ acknowledges the support of National Research Foundation (NRF) Singapore, under its NRFF Fellow programme (Award No. NRF-NRFF2016-02), the Singapore Ministry of Education Tier 1 Grant 2019-T1-002-015, and the Lee Kuan Yew Postdoctoral Fellowship at Nanyang Technological University Singapore. Any opinions, findings and conclusions or recommendations expressed in this material are those of the author(s) and do not reflect the views of National Research Foundation, Singapore. N.S. was supported by JSPS Grants-in-Aid for Scientific Research Grant Number JP19K14615.

\end{acknowledgments}

\bibliographystyle{apsrmp4-2}
\bibliography{myref}



\appendix
\widetext

\setcounter{equation}{0}
\renewcommand{\theequation}{S.\arabic{equation}}

\section{Partial reusability of marginal catalysts}

One of the central motivations of catalytic resource transformation is tied to its reusability, i.e., the final state of the catalyst can be used for the next round of operation without degradation in the performance.   
Different classes of catalytic transformations come with different classes of reusability. 
If $\tau_C$ is a uncorrelated catalyst that admits a transformation 
$\mE(\rho\otimes\tau_C)=\rho'\otimes\tau_C$, we can reuse the final catalyst $\tau_C$ to run another round of the same transformation with input $\rho$. 
Here, the input state $\rho$ for the second round can either be freshly prepared or be prepared by applying another quantum channel $\Lambda$ to the final state as $\Lambda(\rho')=\rho$; the latter scenario is relevant to a thermal engine that makes a cyclic operation. 

A correlated catalyst $\tau_C$ that allows $\mE(\rho\otimes\tau_C)=\tilde \rho_{S'C}$ with $\Tr_C\tilde \rho_{S'C}=\rho'$ and $\Tr_{S'} \tilde \rho_{S'C}=\tau_C$ has a similar reusability property, although in a restricted form. 
After the initial transformation, one can apply $\mE$ over a freshly prepared input $\rho$ and the catalytic part of the final state $\tilde\rho_{S'C}$ to create another copy of the target state $\rho'$. Here, it is essential to use a fresh input $\rho$ that is uncorrelated with $\tilde \rho_{S'C}$ for the second round; in general, even if a channel $\Lambda$ can bring $\rho'$ back to $\rho$, one cannot use $\Lambda\otimes\id(\tilde \rho_{S'C})$ as an input to $\mE$ for the next round due to the remaining correlation between the main and catalytic systems.

We now see that marginal catalysts are also equipped with a certain class of reusability, which is naturally more restricted than that for correlated catalysts. 
Let $\otimes_{i=0}^{K-1}\tau_{C^{(i)}}$ be the marginal catalysts that enable the transformation $\mE\left(\rho\otimes \tau_{C^{(0)}}\cdots\otimes\tau_{C^{(K-1)}}\right)=\rho'\otimes\tau_{C^{(0)}\dots C^{(K-1)}}$ such that $\Tr_{\overline C^{(i)}}\tau_{C^{(0)}\dots C^{(K-1)}}=\tau_{C^{(i)}},\ \forall i$. 
Due to the correlation among catalytic subsystems, the total catalyst changes from the initial form and thus cannot be reused to the next operation even with a freshly prepared initial state $\rho$.
However, one can still utilize the property that the marginal state in each catalytic subsystem is intact, allowing one to reuse some parts of the marginal catalysts to realize the next transformation with the same performance. 
This \emph{partial reusability} admits significantly more transformations than the cases when no reusability can be exploited.

To show this explicitly, suppose that we are initially given $K$ copies of the marginal catalysts $\otimes_{i=0}^{K-1}\tau_{C^{(K-1)}}$. 
Each copy can be used to create a target state $\rho'$ with freshly prepared initial state $\rho$, realizing $K$ transformations in total.   
After the $K$ transformations, the property of marginal catalysts admits additional $K$ transformations --- each transformation reuses $K$ catalysts taken from different copies that are not correlated with each other. 
Fig.~\ref{fig:partial_reusability_small} shows an example of $K=3$; the left figure shows the initial three transformations and the right figure shows the following three transformations reusing the marginal catalysts. (The extension to an arbitrary $K$ is straightforward.) 
One might worry that after the fourth transformation (blue line in the right figure), $\tau_{C^{(1)}}$ in the first row and $\tau_{C^{(2)}}$ in the second row would get correlated due to the correlation between $\tau_{C^{(0)}}$ and $\tau_{C^{(1)}}$ in the first row generated by the first transformation (first row in the left figure), the correlation between $\tau_{C^{(1)}}$ and $\tau_{C^{(2)}}$ in the second row generated by the second transformation (second row in the left figure), and the correlation between $\tau_{C^{(0)}}$ in the first row and $\tau_{C^{(1)}}$ in the second row generated by the fourth transformation (blue line in the right figure); if this was the case, then the sixth transformation (green line in the right figure) would fail because of the correlation between $\tau_{C^{(1)}}$ in the first row and $\tau_{C^{(2)}}$ in the second row. 
However, a careful analysis shows that $\tau_{C^{(1)}}$ in the first row and $\tau_{C^{(2)}}$ in the second row are indeed uncorrelated even after the fourth transformation.
To see this, observe that the fourth transformation (blue line in the right figure) does not involve either $\tau_{C^{(1)}}$ in the first row or $\tau_{C^{(2)}}$ in the second row, and thus the marginal states on these systems do not get affected by this transformation. 
Since these marginal states are clearly uncorrelated just before the fourth transformation (after the initial three transformations), they remain uncorrelated after the fourth transformation. 
This analysis extends to other marginal states, showing the validity of the last three transformations.

This construction can be extended to scenarios in which more copies are provided. 
It is particularly insightful to consider the case when $K^n$ copies of $\otimes_{i=0}^{K-1}\tau_{C^{(K-1)}}$ are available, where $n$ is an arbitrary integer with $n\geq 2$.   
For instance, suppose $K^2$ copies are initially given (Fig.~\ref{fig:partial_reusability_large}). 
By forming $K$ groups of $K$ copies and making use of the above construction for each group, one can first run $K\times 2K = 2K^2$ transformations.  
Then, additional $K^2$ transformations can be realized by combining marginal catalysts from different groups; as shown in the left figure of Fig.~\ref{fig:partial_reusability_large}, the combinations of the first copy in each group can make $K$ transformations by extending the construction in Fig.~\ref{fig:partial_reusability_small}, and the same procedure can be applied to other copies (the middle and right figures of Fig.~\ref{fig:partial_reusability_large}).
Therefore, $2K^2+K^2 = 3K^2$ transformations are possible for $n=2$.
When $n= 3$, one can first form $K$ groups of $K^2$ copies, run $3K^2$ transformations for each group ($K\times 3K^2=3K^3$ in total) following the protocol for $n=2$, and make additional $K^2\times K= K^3$ transformations by combining marginal catalysts from different groups. (One group now has $K^2$ copies, each of which can run $K$ transformations together with the copies in the other groups.)
This totals $3K^3 + K^3 = 4 K^3$ transformations for $n=3$.
This procedure can be extended to an arbitrary $n$, allowing $(n+1) K^n$ transformations for $K^n$ copies of the marginal catalysts initially available.
This means that $(n+1)K^n - K^n = nK^n$ transformations are enabled by reusing the catalysts exploiting the property of the marginal catalysts that the marginal states remain invariant after the catalytic transformation, showing an operational value of the partial reusability of marginal catalysts. 

We also remark that this construction is one of many possible ways to exploit the partial reusability. 
Finding its optimal efficiency will be an interesting problem to study in a future work.

\begin{figure}
    \centering
    \includegraphics[scale=0.6]{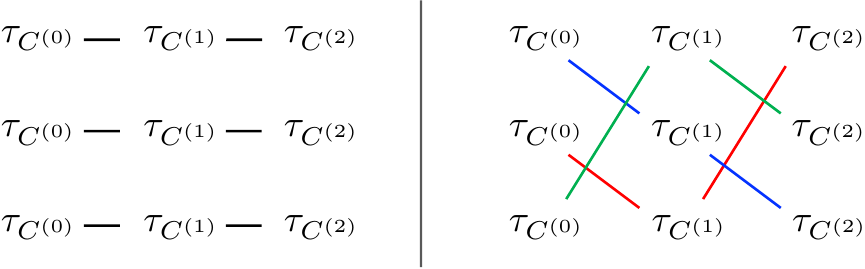}
    \caption{An example of $K=3$. (Left) Each copy of $\tau_{C^{(0)}}\otimes\tau_{C^{(1)}}\otimes\tau_{C^{(2)}}$ in the same row is used to run the first three rounds of the transformation, generating correlation within each copy. (Right) Additional three transformations can be made by using three combinations of $\tau_{C^{(0)}}$, $\tau_{C^{(1)}}$, and $\tau_{C^{(2)}}$ (connected by blue, red, and green lines) that are in different rows and not correlated with each other.}
    \label{fig:partial_reusability_small}
\end{figure}
\begin{figure}
    \includegraphics[width=0.7\columnwidth]{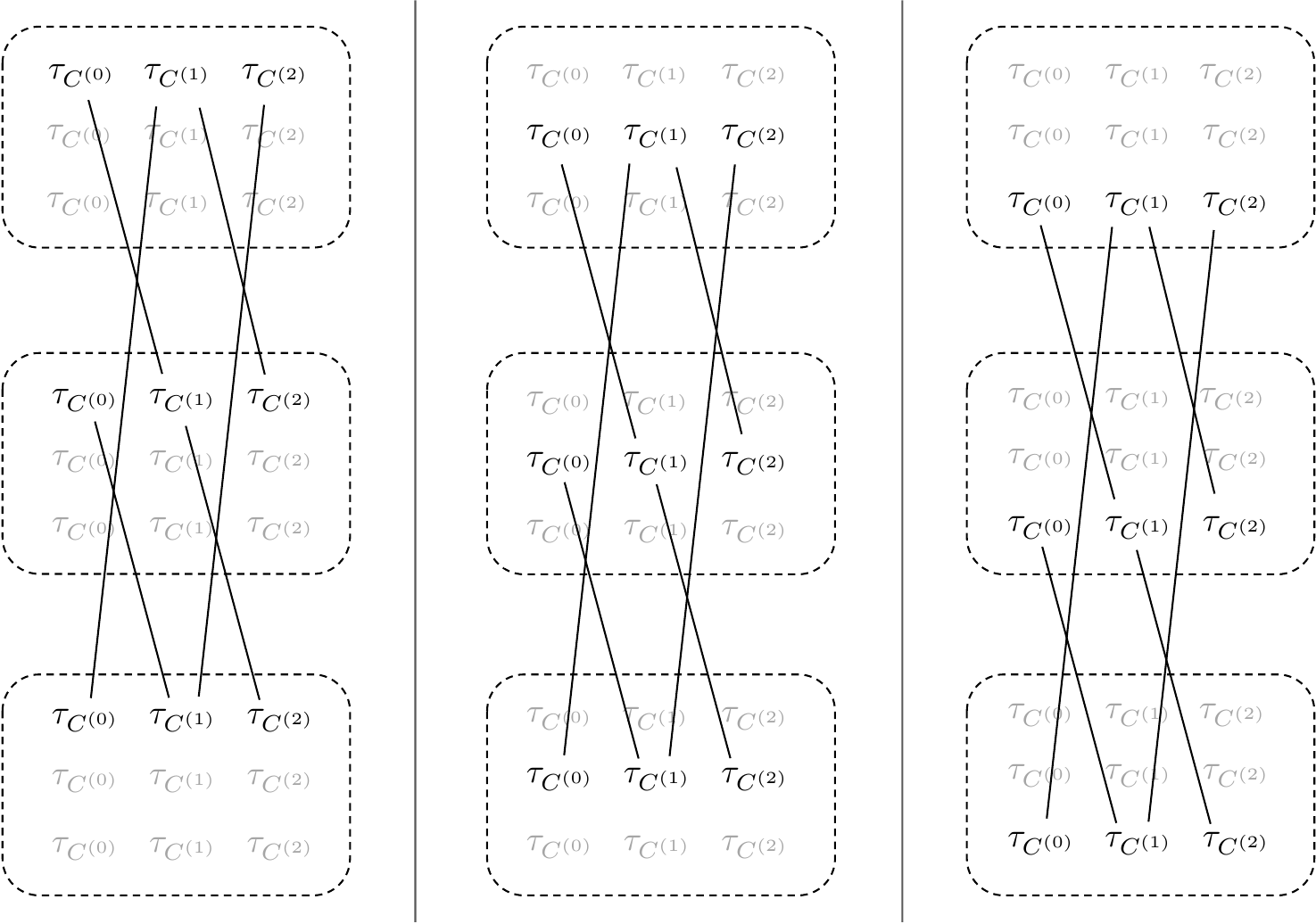}
    \caption{An example of $K=3$ and $n=2$. Each dotted box contains three copies of $\tau_{C^{(0)}}\otimes\tau_{C^{(1)}}\otimes\tau_{C^{(2)}}$ and runs six transformations as in Fig.~\ref{fig:partial_reusability_small}. (Left) Additional three transformations can be made by reusing three combinations of $\tau_{C^{(0)}}$, $\tau_{C^{(1)}}$, and $\tau_{C^{(2)}}$ that are in different boxes and not correlated with each other. Each combination is represented by three marginal states connected by black lines. 
    (Middle and Right) The same procedure can be repeated for different rows, enabling additional six transformations. 
    This protocol allows $3\cdot 3^2 = 27$ transformations in total, which can be compared to $3^2 = 9$ runs that are realized without employing the partial reusability of marginal catalysts.}
    \label{fig:partial_reusability_large}
\end{figure}


\section{Two-level coherence amplification}
Here, we review the protocol proposed in Ref.~\cite{Ding2021amplifying} which amplifies the coherence in two-level systems via marginal-catalytic covariant operations.
Although the authors claim that their protocol can keep the whole catalytic state intact with the presence of correlation only between the system and the catalytic system, unfortunately, their argument does not directly stand as we shall point out.   
Nevertheless, the protocol works for the marginal-catalytic transformation, which is the main focus of the current work.

Let $S$ and $C$ be two-level systems equipped with the same Hamiltonian $H_S=H_C=E\dm{1}$ with some energy $E$.
Let $\Sigma(\eta)\in\mD(S)$ be an initial state and $\Gamma(\eta)\in\mD(C)$ be a catalyst defined as
\bal
 \Sigma(\eta)&\coloneqq \frac{\mbI + \eta X}{2}\\
 \Gamma(\eta)&\coloneqq \frac{1}{2}\left(\mbI + \frac{\sqrt{3}\eta}{2}X + \frac{4-\eta^2}{6}Z\right)
\label{eq:states two-level amplification}
\eal
for $0<\eta<1$ where
$X\coloneqq \ketbra{0}{1}+\ketbra{1}{0}$ and $Z\coloneqq \dm{0}-\dm{1}$ are Pauli matrices. 
Consider a channel $\mE$ defined by the Kraus operators 
\bal
 K_0\coloneqq \begin{pmatrix}
  1 & 0& 0& 0 \\
  0 & \frac{1}{4}& \frac{\sqrt{3}}{4}& 0 \\
  0 & \frac{\sqrt{3}}{4}& \frac{3}{4}& 0 \\
  0 & 0& 0& 1 
  \end{pmatrix},\ \
 K_1\coloneqq  \begin{pmatrix}
  0 & 0& 0& 0 \\
  0 & 0& 0& 0 \\
  0 & -\frac{\sqrt{3}}{2}& \frac{1}{2}& 0 \\
  0 & 0& 0& 0  
 \end{pmatrix}.
  \label{eq:two-level amplification Kraus}
\eal

One can check that $\mE$ is a covariant operation by observing that the first Kraus operator nontrivially acts only on the degenerate subspace spanned by $\ket{01}$ and $\ket{10}$,
and the second Kraus operator acts as a projection from a state in the degenerate subspace to an incoherent state.
A direct calculation reveals that 
\bal
 \Tr_C\mE\left(\Sigma(\eta)\otimes\Gamma(\eta)\right)=&\Sigma(\eta'), \\ \Tr_S\mE\left(\Sigma(\eta)\otimes\Gamma(\eta)\right)=&\Gamma(\eta)
 \label{eq:two-level amplification}
\eal
where $\eta'\coloneqq\frac{\eta(25-\eta^2)}{24}$.
Since $\eta'>\eta$ for $0<\eta<1$, the resultant state in $S$ gets closer to $\dm{+}=(\mbI + X)/2$ than the initial state $\rho$. 
We sequentially apply this protocol until we obtain a state close to $\ket{+}$ within desired accuracy. 
More specifically, suppose that the initial state in $S$ is $\Sigma(\eta_0)$ for some $0<\eta_0<1$. 
We inductively define the sequence $\{\eta_j\}_j$ by
\bal
\eta_{j+1}\coloneqq\frac{\eta_j(25-{\eta_j}^2)}{24}.
\label{eq:catalyst family constants}
\eal
Let $\Gamma(\eta_j)$ be a catalyst defined in $C^{(j)}$ and $\mE_{SC^{(j)}}$ be the operation acting on $SC^{(j)}$ defined by the Kraus operators in \eqref{eq:two-level amplification Kraus}.
We denote the state prepared by the $K$ sequential application of the amplification protocol by 
\bal
 \tilde\rho_{SC^{(0)}\dots C^{(K-1)}}\coloneqq\mE_{SC^{(K-1)}}\circ\dots\circ\mE_{SC^{(1)}}\circ\mE_{SC^{(0)}}\left(\Sigma(\eta_0)\otimes\Gamma(\eta_0)\otimes \Gamma(\eta_1)\otimes\dots\otimes \Gamma(\eta_{K-1})\right).
 \label{eq:two-level amplification protocol}
\eal
Owing to \eqref{eq:two-level amplification}, we get
$\Tr_{\overline{S}}\tilde\rho_{SC^{(0)}\dots C^{(K-1)}}=\Sigma(\eta_K)$ and $\Tr_{\overline{C^{(j)}}}\tilde\rho_{SC^{(0)}\dots C^{(K-1)}}=\Gamma(\eta_j)\ \forall j$. 
By taking $K$ sufficiently large, one can bring $\Sigma(\eta_K)$ arbitrarily close to $\ket{+}$.

Note that we use a slightly different Kraus operator in \eqref{eq:two-level amplification Kraus} from the one introduced in Ref.~\cite{Ding2021amplifying}. The advantage of using the form in \eqref{eq:two-level amplification Kraus} is that the resultant state in $S$ after an application of $\mE$ ends up in the family $\{\Sigma(\eta)\}_\eta$, which can be directly used for the next round.
On the other hand, the channel considered in Ref.~\cite{Ding2021amplifying} induces an additional $Z$ term in the resultant state that needs to be addressed by introducing another ancillary system in the catalytic system to make the sequential application work. 

\bigskip

The transformation from $\Sigma(\eta_0)$ to $\Sigma(\eta_K)$ should be considered as a marginal-catalytic covariant transformation rather than a correlated-catalytic covariant transformation because correlation can be generated among catalytic systems $C^{(0)}\dots C^{(K-1)}$. 
The authors of Ref.~\cite{Ding2021amplifying} claim that one can turn this protocol into a correlated-catalytic covariant transformation by using a catalyst defined in $C^{(0)}\dots C^{(K-1)}$ that already has correlation between the subsystems; namely, defining $\mE_{0:K-1}\coloneqq\mE_{SC^{(K-1)}}\circ\dots\circ\mE_{SC^{(1)}}\circ\mE_{SC^{(0)}}$, they propose a catalyst $\tau_{C^{(0)}\dots C^{(K-1)}}$ such that $\Tr_{\overline{S}}\mE_{0:K-1}(\Sigma(\eta_0)\otimes\tau_{C^{(0)}\dots C^{(K-1)}})=\Sigma(\eta_K)$ and $\Tr_{S}\mE_{0:K-1}(\Sigma(\eta_0)\otimes\tau_{C^{(0)}\dots C^{(K-1)}})=\tau_{C^{(0)}\dots C^{(K-1)}}$. 
Their argument is based on their observation that any state $\tilde\tau_{C^{(0)}\dots C^{(K-1)}}$ such that $\Tr_{\overline{C^{(j)}}}\tilde\tau_{C^{(0)}\dots C^{(K-1)}}=\Gamma(\eta_j)\ \forall j$ would allow one to run the protocol in \eqref{eq:two-level amplification protocol} with the same performance at the level of marginal states, i.e., 
\bal
 \Tr_{\overline{S}}\mE_{0:K-1}(\Sigma(\eta_0)\otimes \tilde\tau_{C^{(0)}\dots C^{(K-1)}})=\Sigma(\eta_K)
 \label{eq:Ding claim 1}
\eal
and
\bal
 \Tr_{\overline{C^{(j)}}}\mE_{0:K-1}(\Sigma(\eta_0)\otimes \tilde\tau_{C^{(0)}\dots C^{(K-1)}})=\Gamma(\eta_j),
 \label{eq:Ding claim 2}
\eal
because each $\mE_{SC^{(k)}}$ does not ``touch'' $C^{(k')}$ with $k'\neq k$. 
However, unfortunately, Eqs.~\eqref{eq:Ding claim 1}, \eqref{eq:Ding claim 2} do not hold in general.
Even though $\mE_{SC^{(k)}}$ only acts on $SC^{(k)}$, it \emph{can} create correlation between $S$ and $C_{k'}$ for $k'\neq k$ through the initial correlation contained in $\tilde\tau_{C^{(0)}\dots C^{(K-1)}}$, preventing one from using the relations in \eqref{eq:two-level amplification}.
To see this more explicitly, let us consider the case $L=2$. (Fig.~\ref{fig:twolevel}).
We first run the protocol starting with the product catalysts $\Gamma(\eta_0)\otimes\Gamma(\eta_1)\in\mD(C^{(0)}\otimes C^{(1)})$. 
The resulting state in $C^{(0)}C^{(1)}$, $\tilde\tau_{C^{(0)}C^{(1)}}\coloneqq \Tr_S\mE_{SC^{(1)}}\circ\mE_{SC^{(0)}}(\Sigma(\eta_0)\otimes\Gamma(\eta_0)\otimes\Gamma(\eta_1))$, keeps the original states as its marginals as $\Tr_{\overline{C^{(0)}}}\tilde\tau_{C^{(0)}C^{(1)}}=\Gamma(\eta_0)$, $\Tr_{\overline{C^{(1)}}}\tilde\tau_{C^{(0)}C^{(1)}}=\Gamma(\eta_1)$ while $\tilde\tau_{C^{(0)}C^{(1)}}\neq \Gamma(\eta_0)\otimes\Gamma(\eta_1)$ due to the correlation generated via $S$.  
Now, let us use $\tilde\tau_{C^{(0)}C^{(1)}}$ as a catalyst state to start with. 
Let $\tilde \rho_{SC^{(1)}}\coloneqq\Tr_{C^{(0)}}\mE_{SC^{(0)}}(\Sigma(\eta_0)\otimes\tilde\tau_{C^{(0)}C^{(1)}})$ be the state on $SC^{(1)}$ after the application of $\mE_{SC^{(0)}}$. 
Then, although it holds that $\Tr_{C^{(1)}}\tilde\rho_{SC^{(1)}}=\Sigma(\eta_1)$ and $\Tr_{S}\tilde\rho_{SC^{(1)}}=\Gamma(\eta_1)$, it is no longer a product state, i.e., $\tilde\rho_{SC^{(1)}}\neq \Sigma(\eta_1)\otimes\Gamma(\eta_1)$. 
Therefore, one cannot rely on \eqref{eq:two-level amplification} anymore, and in particular, we get
\bal
  \Tr_{\overline{C^{(1)}}}\mE_{0:1}(\Sigma(\eta_0)\otimes\tilde\tau_{C^{(0)}C^{(1)}})\neq \Gamma(\eta_1),
\eal
violating \eqref{eq:Ding claim 2}.
This can be explicitly checked by a direct calculation; the specific form for $\Tr_{\overline{C^{(1)}}}\mE_{0:1}(\Sigma(\eta_0)\otimes\tilde\tau_{C^{(0)}C^{(1)}})$ is obtained as

\bal
\begin{pmatrix}
 \frac{11919015936 - 1140507536 \eta_0^2 + 91814899 \eta_0^4 + 
 1471879 \eta_0^6 - 161703 \eta_0^8 + 2493 \eta_0^{10}}{14495514624} && \frac{\eta_0 (412672 - 19231 \eta_0^2 - 234 \eta_0^4 + 
  9 \eta_0^6)}{524288 \sqrt{3}}\\
  \frac{\eta_0 (412672 - 19231 \eta_0^2 - 234 \eta_0^4 + 
  9 \eta_0^6)}{524288 \sqrt{3}} && \frac{2576498688 + 1140507536 \eta_0^2 - 91814899 \eta_0^4 - 
 1471879 \eta_0^6 + 161703 \eta_0^8 - 2493 \eta_0^{10}}{14495514624}
 \end{pmatrix}
\eal
whereas
\bal
 \Gamma(\eta_1) = \begin{pmatrix}
  \frac{1}{12}\left(10 - \frac{\eta_0^2}{576} (25 - \eta_0^2)^2\right) && \frac{\eta_0 (25 - \eta_0^2)}{32 \sqrt{3}}\\
  \frac{\eta_0 (25 - \eta_0^2)}{32 \sqrt{3}}  && \frac{1152 + 625 \eta_0^2 - 50 \eta_0^4 + \eta_0^6}{6912}
 \end{pmatrix}.
\eal

\begin{figure}
    \centering
    \includegraphics[scale=0.5]{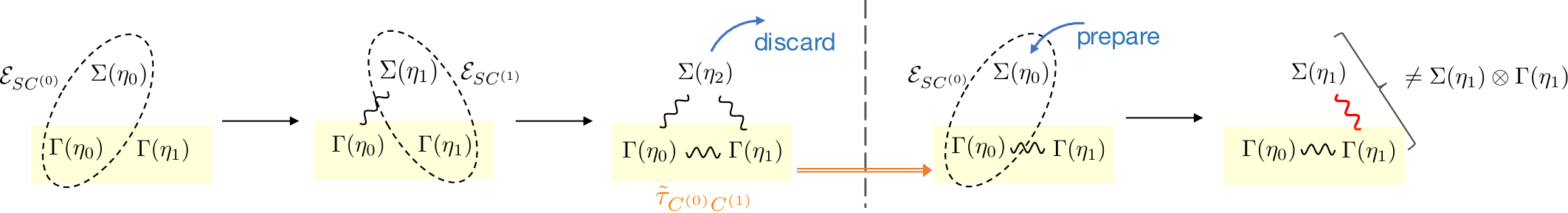}
    \caption{We first apply $\mE_{SC^{(0)}}$ over $\Sigma(\eta_0)\otimes\Gamma(\eta_0)$ and $\mE_{SC^{(1)}}$ over $\Sigma(\eta_1)\otimes\Gamma(\eta_1)$ sequentially, which produces reduced state $\tilde\tau_{C^{(0)}C^{(1)}}$ in the catalytic system. Then, we prepare another state $\Sigma(\eta_0)$ and apply $\mE_{SC^{(0)}}$ over $\Sigma(\eta_0)$ and $\tilde\tau_{C^{(0)}C^{(1)}}$. Even though $\mE_{SC^{(0)}}$ does not ``touch'' $C^{(1)}$, it can generate correlation between $S$ and $C^{(1)}$ via the initial correlation between $C^{(0)}$ and $C^{(1)}$ in $\tilde\tau_{C^{(0)}C^{(1)}}$.}
    \label{fig:twolevel}
\end{figure}

\section{On Theorem~\ref{thm:transform catalyst}}

We first give a detailed proof of Theorem~\ref{thm:transform catalyst} and later comment on the extension of our protocol to implement the desired quantum channel and the size of catalysts used in our protocol. 

\vspace{\baselineskip}

{\it Theorem~\ref{thm:transform catalyst}}:
For any $\rho\in\mD(S)$, $\rho'\in\mD(S')$ and $\epsilon>0$, $\rho$ can be transformed to a state $\rho'_\epsilon\in\mD(S')$ such that $\frac{1}{2}\|\rho'-\rho'_\epsilon\|_1\leq \epsilon$ by a marginal-catalytic covariant transformation.

\bigskip

\begin{proof}
Since covariant operations and marginal-catalytic covariant operations are closed under concatenation, it suffices to check that every operation in the protocol is either covariant or marginal-catalytic covariant.
It is useful to also recall that taking the partial trace over any subsystems and preparing incoherent states with respect to the Hamiltonian of the added system are both covariant operations. 
We provide detailed descriptions for each step in the protocol. 

\textit{\textbf{Step 1}}---
Since we will construct an arbitrary state from scratch, we first trace out the initial state in $S$.
Next, we prepare $\ket{0}_{R_i}$ for $i\in\{0,\dots,d_{S'}-1\}$, the ground state of two-level system $R_i$ with Hamiltonian
\bal
 H_{R_i} = (E_{S',i}-E_{S',j^\star}) \dm{1}_{R_i}
\label{eq:hamiltonian reference}
\eal
where $\{E_{S',j}\}_{i=0}^{d_{S'}-1}$ is the eigenenergy of the Hamiltonian in the final system $S'$, and $j^\star$ is an arbitrary fixed index independent of $i$.
Recall that the preparation of $\ket{0}_{R_i}$ is covariant.

Let $\eta$ be an arbitrary real number satisfying $0<\eta<1$. 
We introduce catalysts $\tau_i^{a}\coloneqq\Sigma(\eta)$ and $\tau_i^{b}\coloneqq\Gamma(\eta)$ as in \eqref{eq:states two-level amplification} in catalytic subsystems $C_i^{a}$ and $C_i^{b}$ respectively, where both subsystems are equipped with Hamiltonian $H_{R_i}$.
We shall first amplify the coherence in $C_i^a$ by using $C_i^b$ as a correlated catalyst, and then distribute the increment of coherence to the main subsystem $R_i$.
We first apply the channel with the Kraus operators in \eqref{eq:two-level amplification Kraus} on $\tau_i^{a}\otimes\tau_i^{b}$ to obtain $\Sigma(\eta')$ with $\eta'=\frac{\eta(25-\eta^2)}{24}$ as a reduced state on $C_i^{a}$ while retaining $\Gamma(\eta)$ as a reduced state on $C_i^{b}$.
We next distribute coherence to $R_i$.
Let $U_{R_iC_i^{a}}$ be a unitary matrix acting on $R_iC_i^{a}$ defined as 
\bal
U_{R_iC_i^{a}}:=\dm{00}+\dm{11}+\left(\alpha\ketbra{01}{01}+\sqrt{1-\alpha^2}\ketbra{01}{10}-\sqrt{1-\alpha^2}\ketbra{10}{01}+\alpha\ketbra{10}{10}\right)
\label{eq:unitary to create seed}
\eal
where $\alpha$ is a real number determined shortly. 
Since this preserves the total energy, the application of this unitary is covariant. 
However, because of the nontrivial transformation within the degenerate subspace ${\rm span}\{\ket{01}, \ket{10}\}$, non-zero coherence can be `transferred' from $C_i^{a}$ to $R_i$. 
Let $\xi_i\coloneqq \Tr_{C_i^{a}}\left[U_{R_iC_i^{a}}(\dm{0}_{R_i}\otimes \Sigma(\eta')) U_{R_iC_i^{a}}^\dagger\right]$ and $\tilde\tau_i^{a}\coloneqq \Tr_{R_i}\left[U_{R_iC_i^{a}}(\dm{0}_{R_i}\otimes \Sigma(\eta')) U_{R_iC_i^{a}}^\dagger\right]$ be the reduced final state of the main subsystem $R_i$ and the catalytic subsystem $C_i^a$, respectively.
By direct calculation, one can check that
\bal
 \xi_i &= \frac{1}{2}\begin{pmatrix}
 1+\alpha^2 && -\eta'\sqrt{1-\alpha^2}\\ -\eta'\sqrt{1-\alpha^2} && 1-\alpha^2
\end{pmatrix},\\ 
\tilde\tau_i^{a}&=\frac{1}{2}\begin{pmatrix}
2-\alpha^2 && a \eta' \\ a \eta' && \alpha^2
\end{pmatrix}.
\eal
To restore the marginal state of $C_i^a$ to its original state, we apply a covariant operation defined by Kraus operators $K_0=\frac{1}{\sqrt{2-\alpha^2}}\dm{0}+\dm{1}$ and $K_1=\sqrt{\frac{1-\alpha^2}{2-\alpha^2}}\ketbra{1}{0}$ to $\tilde\tau_i^{a}$ and get $\Sigma(\tilde\eta)$ with $\tilde\eta = \frac{\alpha\eta'}{\sqrt{2-\alpha^2}}$.
Since we would like to have $\tilde\eta=\eta$ to restore $\tau_i^{a}$ on $C_i^{a}$, we choose $\alpha$ so that 
\bal
 \eta = \frac{\alpha\eta(25-\eta^2)}{24\sqrt{2-\alpha^2}},\ \ (0<\alpha<1),
\eal
where a direct calculation gives 
\bal
 \alpha = \sqrt{2\left[1+\left(\frac{25-\eta^2}{24}\right)^2\right]^{-1}}.
\eal
We note that $\frac{25-\eta^2}{24}>1$ for $0<\eta<1$ confirms the condition $\alpha<1$.
On the other hand, applying a covariant operation with Kraus operators $K_0=\frac{1}{\sqrt{1+\alpha^2}}\dm{0}-\dm{1}, K_1=\frac{\alpha}{\sqrt{1+\alpha^2}}\ketbra{1}{0}$ on $R_i$ brings $\xi_i$ to $\Sigma\left(\frac{\eta'\alpha\sqrt{1-\alpha^2}}{\sqrt{1+\alpha^2}}\right)$, which now has a non-zero coherence.

\textit{\textbf{Step 2}}---
To amplify the coherence generated in $R_i$, we introduce $K$ catalytic subsystems $C_i^{(0)},C_i^{(1)},\dots,C_i^{(K-1)}$ where each subsystem is equipped with the Hamiltonian $H_{R_i}$. We prepare catalysts $\otimes_{j=0}^{K-1}\Gamma(\eta_j)$ in $\otimes_{j=0}^{K-1}C_i^{(j)}$, where we set $\eta_0=\frac{\eta'\alpha\sqrt{1-\alpha^2}}{\sqrt{1+\alpha^2}}$ and choose $\eta_j$ for $j=1,2,\dots,K-1$ inductively by \eqref{eq:catalyst family constants}.  
The two-level coherence amplification protocol gives $\Sigma(\eta_{K})=\frac{1}{2}(\mbI+\eta_{K}X)$ in $R_i$ by a marginal-catalytic covariant operation.
In particular, a state arbitrarily close to $\ket{+}_{R_i}$ can be prepared by taking sufficiently large $K$. 

\textit{\textbf{Step 3}}---
We repeat Step~1 and 2 for $L$ times to prepare a state close to $\ket{+}_{R_i}^{\otimes L}$ for each $i\in\{0,\dots,d_{S'}-1\}$.
How close between the prepared state and $\ket{+}_{R_i}^{\otimes L}$ is determined by $K$ chosen above. 
In system $S'$, we prepare $\ket{j^\star}_{S'}$ where $j^\star$ is the index introduced in \eqref{eq:hamiltonian reference}, and $\ket{j^\star}_{S'}$ is an eigenstate of $H_{S'}$ with eigenvalue $E_{S',j^\star}$, which is an incoherent state and thus can be prepared by a covariant operation. 
We use the state in $R=\otimes_{i=0}^{d_{S'}-1} R_i^{\otimes L}$ as a coherent resource to implement an operation on $S'$ that prepares a state close to the target state $\rho'$.
We first note that for each $R_i$, the resource state can be written as 
\bal
 \ket{+}_{R_i}^{\otimes L} &= \frac{1}{2^{L/2}}\sum_{n=0}^{L} \binom{L}{n}^{1/2}\ket{n},\\ \ket{n}&\coloneqq\binom{L}{n}^{-1/2}\sum_{\substack{{\bf i}\in \{0,1\}^{L}\\|{\bf i}|=n}}\ket{{\bf i}}
 \label{eq:resource binomial}
\eal
where $|{\bf v}|$ is the number of $1$'s in the bit string ${\bf v}$, and $\binom{m}{n}\coloneqq \frac{m!}{n!(m-n)!}$ is a binomial coefficient.
By construction, $\ket{n}$ is an energy eigenstate of the Hamiltonian of the total systems $R_i^{\otimes L}$ defined as $H_{R_i}^{\otimes L}=H_{R_i}\otimes \mbI_{R_i^{\otimes L-1}} + \mbI_{R_i}\otimes H_{R_i}\otimes \mbI_{R_i^{\otimes L-2}}+\dots + \mbI_{R_i^{\otimes L-1}}\otimes H_{R_i}$ with energy $n(E_{S',i}-E_{S',j^\star})$.
For later use, we embed the state on $R_i^{\otimes L}$ into a system with one additional dimension $\tilde R_i^{L}$ with Hamiltonian 
\bal
H_{\tilde R_i^{L}}=H_{R_i^{\otimes L}}-(E_{S',i}-E_{S',j^\star})\dm{-1},
\eal
where $\ket{-1}$ is an eigenstate of the new Hamiltonian orthogonal to all the eigenstates of $H_{R_i^{\otimes L}}$. 
The embedding $\sum_i \ket{i}_{\tilde R_i^L}\bra{i}_{R_i^{\otimes L}}$ is a covariant operation.  
By restricting our attention to $\{ \ket{n}\}_{n=-1}^L$, this system can be regarded as a ladder system where the state $\ket{n}$ has its energy $n(E_{S',i}-E_{S',j^\star})$, and in this view the embedding of $\ket{-1}$ can be understood as the extension of the ladder to one level below.
This state $\ket{+}_{R_i}^{\otimes L}$ serves as a coherent resource state in the approximation of unitary operations.

For our aim, it suffices to show that any pure state $\psi'$ in $S'$ can be created because any mixed state $\rho'$ can be obtained by stochastically preparing the pure states that appear in a convex combination constituting $\rho'$.
For any pure state $\psi'$, there exists a unitary $V$ such that
\bal
\ket{\psi'}=V\ket{j^\star}_{S'}.
\label{eq:goal pure}
\eal 
Then, we consider a channel with Kraus operators

\bal
 K_0 := \sum_{k=0}^{d_{S'}-1}\left[ \dm{k}V\dm{j^\star}\otimes\Delta_{\tilde R_k^L}\bigotimes_{j\neq k} P_{\tilde R_j^L}^{0:L} \right]
\eal
\bal
 K_1 := (\mbI_{S'}-\dm{j^\star})\bigotimes_{k=0}^{d_{S'}-1} P_{\tilde R_k^L}^{0:L}+\mbI_{S'}\otimes\left[\mbI-\bigotimes_{k=0}^{d_{S'}-1} P_{\tilde R_k^L}^{0:L}\right]
\eal
where $P_{\tilde R_k^L}^{0:L}\coloneqq \sum_{n=0}^{L}\dm{n}_{\tilde R_k^L}$ and $\Delta_{\tilde R_k^L}$ is the energy-shifting operator acting on system $\tilde R_k^L$ defined as 
\bal
 \Delta_{\tilde R_k^L} \coloneqq \sum_{n=0}^{L} \ketbra{n-1}{n}_{\tilde R_k^L},
 \label{eq:energy-shifting}
\eal
where $\ket{n}$ is the energy eigenstate defined in \eqref{eq:resource binomial}.
One can check that $K_0$ and $K_1$ are valid Kraus operators since

\bal
 K_0^\dagger K_0 &= \sum_{k=0}^{d_{S'}-1}  \dm{j^\star}V^\dagger\dm{k}V\dm{j^\star}\otimes \Delta_{\tilde R_k^L}^\dagger\Delta_{\tilde R_k^L}\bigotimes_{j\neq k}P_{\tilde R_j^L}^{0:L}\\
 &=\dm{j^\star}\bigotimes_{k=0}^{d_{S'}-1}P_{\tilde R_j^L}^{0:L},
\eal
where we used $\Delta_{\tilde R_k^L}^\dagger\Delta_{\tilde R_k^L}=P_{\tilde R_k^L}^{0:L}$, and
\bal
 K_1^\dagger K_1 =  (\mbI_{S'}-\dm{j^\star})\bigotimes_{k=0}^{d_{S'}-1} P_{\tilde R_k^L}^{0:L}+\mbI_{S'}\otimes\left[\mbI-\bigotimes_{k=0}^{d_{S'}-1} P_{\tilde R_k^L}^{0:L}\right]
\eal
satisfy $K_0^\dagger K_0 + K_1^\dagger K_1 = \mbI_{S'}\otimes\mbI_{\tilde R}$.
In addition, this channel is covariant since each term in $K_0$ conserves the total energy and thus commutes with the total Hamiltonian, and $K_1$ is just a projector onto a subspace spanned by the energy eigenbasis.
If we apply the above channel to $\dm{j^\star}\bigotimes_{k=0}^{d_{S'}-1}\dm{+}_{\tilde R_k^L}^{\otimes L}$ and take the partial trace over $\tilde R$ systems, $K_0$ deterministically clicks and gives  
\bal
\Tr_{\tilde R}&\left[K_0 \left( \dm{j^\star}\bigotimes_{k=0}^{d_{S'}-1}\dm{+}_{\tilde R_k^L}^{\otimes L} \right) K_0^\dagger\right] \\&= \sum_{k=0}^{d_{S'}-1} \sum_{l\neq k} \dm{k}V\dm{j^\star}V^\dagger\dm{l} \Tr\left[\Delta_{\tilde R_k^L}\dm{+}_{\tilde R_k^L}^{\otimes L}\right]\Tr\left[\dm{+}_{\tilde R_l^L}^{\otimes L}\Delta_{\tilde R_l^L}^\dagger\right]+\sum_{k=0}^{d_{S'}-1} \dm{k}V\dm{j^\star}V^\dagger\dm{k}
\label{eq:applying unitary}
\eal
where we used $\Tr\left[\Delta_{\tilde R_k^L}\dm{+}_{\tilde R_k^L}^{\otimes L}\Delta_{\tilde R_k^L}^\dagger\right]=1$ to get the second term. 
Since we can prepare a state arbitrarily close to $\bigotimes_{k=0}^{d_{S'}-1}\dm{+}_{\tilde R_k^L}^{\otimes L}$ by taking sufficiently large $K$ in Step~1 and $L$ in Step~2, the obtained state is close to \eqref{eq:applying unitary} with arbitrary accuracy. 
Furthermore, since $\Tr\left[\Delta_{\tilde R_k^L}\dm{+}_{\tilde R_k^L}^{\otimes L}\right]$ converges to 1 in the limit of $L\to\infty$ as we show below, the state in \eqref{eq:applying unitary} can be made arbitrarily close to $\psi'=V\dm{j^\star}V^\dagger$ by taking large enough $L$.

We shall show $\Tr\left[\Delta_{\tilde R_k^L}\dm{+}_{\tilde R_k^L}^{\otimes L}\right]\to 1$ as $L\to\infty$. 
Using the expression in \eqref{eq:resource binomial}, we have 
\bal
 \Tr\left[\Delta_{\tilde R_k^L}\dm{+}_{\tilde R_k^L}^{\otimes L}\right]=\frac{1}{2^L}\sum_{n=0}^{L-1}\binom{L}{n+1}^{1/2}\binom{L}{n}^{1/2}.
 \label{eq:shift overlap}
\eal
Let us first consider the case when $L$ is even.
Then, it holds that
\bal
\begin{cases}
 \binom{L}{n+1}\geq \binom{L}{n} & n\leq  \frac{L}{2}-1 \\
 \binom{L}{n}\geq \binom{L}{n+1} & n\geq  \frac{L}{2}
\end{cases}.
\eal
Thus, \eqref{eq:shift overlap} can be bounded as 
\bal
 \Tr\left[\Delta_{\tilde R_k^L}\dm{+}_{\tilde R_k^L}^{\otimes L}\right]
 &=\frac{1}{2^L}\sum_{n=0}^{L-1} \binom{L}{n}^{1/2}\binom{L}{n+1}^{1/2} \\
 &\geq \frac{1}{2^L}\sum_{n=0}^{\frac{L}{2}-1}\binom{L}{n}+\frac{1}{2^L}\sum_{n=\frac{L}{2}}^{L-1}\binom{L}{n+1}\\
 &=\frac{1}{2^L}\sum_{n=0}^{\frac{L}{2}-1}\binom{L}{n}+\frac{1}{2^L}\sum_{n=\frac{L}{2}+1}^{L}\binom{L}{n}\\
 &=1-\frac{1}{2^L}\binom{L}{L/2}.
 \label{eq:shift overlap bound}
\eal
Stirling's formula~\cite{olver2010nist} implies that for any $\delta>0$, there exists a sufficiently large $n$ such that $\left|n!-\sqrt{2\pi}n^{n+1/2}e^{-n}\right|<\delta$.
This provides a lower bound for the last line in \eqref{eq:shift overlap bound} which holds for any $\delta_e>0$ with sufficiently large $L$ as 
\bal
1- \frac{1}{2^L} \binom{L}{L/2}&\geq 1-\frac{1}{2^L}\frac{\sqrt{2\pi}L^{L+1/2}e^{-L}}{\left[\sqrt{2\pi}(L/2)^{L/2+1/2}e^{-L/2}\right]^2}-\delta_e\\
&=1-\frac{2}{\sqrt{2\pi}}L^{-1/2}-\delta_e\xrightarrow[L\to\infty]{} 1-\delta_e. 
\eal

The case when $L$ is odd can be bounded similarly. For any $\delta_o>0$ and sufficiently large $L$, it holds that
\bal
 \Tr\left[\Delta_{\tilde R_k^L}\dm{+}_{\tilde R_k^L}^{\otimes L}\right]
  &\geq \frac{1}{2^L}\sum_{n=0}^{\frac{L-1}{2}}\binom{L}{n}+\frac{1}{2^L}\sum_{n=\frac{L+1}{2}}^{L-1}\binom{L}{n+1}\\
 &=\frac{1}{2^L}\sum_{n=0}^{\frac{L-1}{2}}\binom{L}{n}+\frac{1}{2^L}\sum_{n=\frac{L+1}{2}+1}^{L}\binom{L}{n}\\
 &=1-\frac{1}{2^L}\binom{L}{\frac{L+1}{2}}\\
 &\geq  1-\frac{1}{2^L}\frac{\sqrt{2\pi}L^{L+1/2}e^{-L}}{\left[\sqrt{2\pi}\left(\frac{L+1}{2}\right)^{L/2+1}e^{-(L+1)/2}\right]\left[\sqrt{2\pi}\left(\frac{L-1}{2}\right)^{L/2}e^{-(L-1)/2}\right]}-\delta_o\\
 &\geq  1-\frac{2}{\sqrt{2\pi}}(1+1/L)^{-L/2}(1-1/L)^{-L/2}\left(\sqrt{L}+1/\sqrt{L}\right)^{-1}-\delta_o\xrightarrow[L\to\infty]{} 1-\delta_o.
 \label{eq:shift overlap bound odd}
\eal
Since $\delta_e,\delta_o>0$ can be taken arbitrarily small, we obtain $\lim_{L\to\infty}\Tr\left[\Delta_{\tilde R_k^L}\dm{+}_{\tilde R_k^L}^{\otimes L}\right] \geq 1$.
Together with $\Tr\left[\Delta_{\tilde R_k^L}\dm{+}_{\tilde R_k^L}^{\otimes L}\right] \leq 1$,  we get  $\lim_{L\to\infty}\Tr\left[\Delta_{\tilde R_k^L}\dm{+}_{\tilde R_k^L}^{\otimes L}\right] = 1$.

Finally, we can eliminate the correlation between $S'$ and $C$ by using catalysts that depend on the final state by using the technique introduced in Ref.~\cite{Muller2016generalization}. (See also Fig.~\ref{fig:remove_correlation}).
Let $\tau_{C^{(0)}},\dots,\tau_{C^{(N-1)}}$ be the catalysts that prepare $\rho'_\epsilon$ in the reduced state on $S'$. 
Let us take $\tau_{C^{(N)}}\coloneqq\rho'_\epsilon$. 
We first run the protocol using $\tau_{C^{(0)}},\dots,\tau_{C^{(N-1)}}$ to obtain a state whose marginal on $S'$ is $\rho'_\epsilon$ and finally swap $S'$ and $C^{(N)}$. 
This prepares $\rho'_\epsilon$ in $S'$ decoupled from the catalytic systems, and
since the SWAP operation between two isomorphic systems is covariant, the total operation is still covariant. 

\begin{figure}
    \centering
    \includegraphics[scale=0.5]{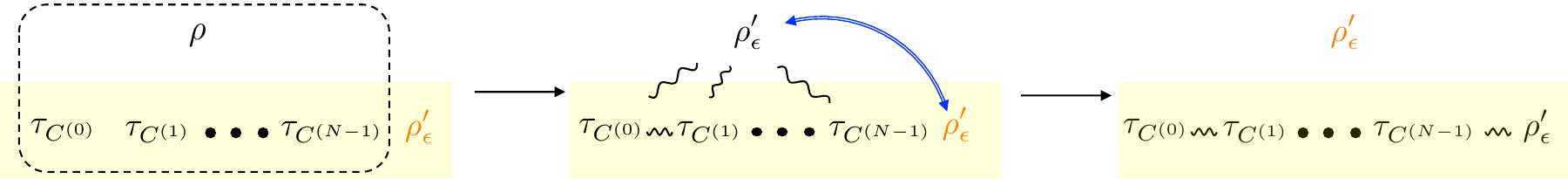}
    \caption{Besides the catalysts used for the main protocol, we also prepare the final state $\rho'_\epsilon$ in the catalytic system. After successfully creating a total state whose marginal on $S'$ is an approximate target state, we swap the marginal state and the copy prepared in the catalytic system. Since marginal states in the catalytic subsystems unchanged, this still constitutes a valid marginal-catalytic protocol.}
    \label{fig:remove_correlation}
\end{figure}

\end{proof}

\begin{remark}
Step~3 enables one to implement any unitary $V$ on the main system with arbitrary accuracy. This construction can be used to implement an arbitrary \emph{quantum channel} $\mE:S\rightarrow S_{\rm out}$ that acts on the initial state $\rho$ only using covariant operations assisted with marginal catalysts. 
Let $\mE(\cdot) = \Tr_{\overline{S}_{\rm out}}[V(\cdot\otimes\dm{0}_E)V^\dagger]$ be a Stinespring dilation for $\mE$, where $\ket{0}_E$ is a pure incoherent state on an ancillary system $E$, $V$ is a unitary on $SE$, and $\Tr_{\overline{S}_{\rm out}}$ denotes the partial trace over all systems other than $S_{\rm out}$.
Then, consider the following modified protocol.

\begin{itemize}
    \item Step~1': Prepare an incoherent state $\ket{0}_E$ in an ancillary system $E$. 
    Taking the composite system $SE$ as $S'$ in Step~1 above, we run the protocol in Step~1 to prepare states that have coherence for the energy levels of $SE$.
    Here, unlike Step~1, we do not discard the initial state $\rho$ but keep it aside for later use. 
    
    \item Step~2': Run the protocol in Step~2 to amplify the coherence created in Step~1'. 
    \item Step~3': Run the protocol in Step~3 to approximately apply the unitary $V$ --- which appears in the dilution form of the desired channel $\mE$ --- to $\rho\otimes\dm{0}_E$.
    Take the partial trace (which is a covariant operation) over all systems except the output system $S_{\rm out}$. 
\end{itemize}
The final state approximates $\mE(\rho)$, and the error can be made arbitrarily small by making a sufficiently large number of repetitions in the coherence amplification in Step~2' and by preparing a sufficiently large number of copies of the coherent states at the beginning of Step~3'.  

We note that the last step in Step~3 for completely eliminating the correlation between the main system and catalysts (described in Fig.~\ref{fig:remove_correlation}) does not carry over to this case. 
Nevertheless, such correlation can be made arbitrarily small (in the sense that the total state can be made arbitrarily close to a product state) by repeating the coherence amplification procedure in Step~2' many times until the resulting coherent states become sufficiently close to pure states.   
\end{remark}

\begin{remark}
Let us briefly comment on the size of the catalysts required for our protocol. 
Most catalysts are consumed in the coherence amplification in Step~2, which is repeated many times to ensure a sufficient number of coherent states to implement a unitary $V$ in Step~3. 
For an arbitrary target error $\epsilon$, there exist a number $K(\epsilon)$ of running the two-level coherence amplification process in Step~2 and a number $L(\epsilon)$ of qubit coherent states required to implement a unitary $V$ in Step~3, which together realize the final accuracy $\epsilon$.
Since our construction creates $L(\epsilon)$ copies of qubit coherent states (each of which requires $K(\epsilon)$ single-qubit catalysts) for each energy level of the system $S'$ with dimension $d_{S'}$, the dimension of the catalysts used in the protocol is estimated as $\sim 2^{d_{S'} K(\epsilon) L(\epsilon)}$.

We stress that --- similarly to many other works in catalytic resource transformation --- the main objective of Theorem~\ref{thm:transform catalyst} is to present a general protocol that accomplishes powerful transformation tasks and is \emph{not} to find an optimal protocol regarding the catalyst size; indeed, our protocol might be inefficient in the size of catalysts at the cost of its generality. 
A close investigation of the optimal size of the catalysts required to ensure a given target accuracy will certainly make an interesting research program, which we leave for future work. 

\end{remark}


\section{Proof of Proposition~\ref{pro:general monotone marginal catalytic}}

{\it Proposition~\ref{pro:general monotone marginal catalytic}}:
For any given $\mF$ and $\mO_\mF$, suppose that a resource measure $\mfR$ satisfies the superadditivity and the tensor-product additivity.
Then, if $\rho$ is transformable to $\rho'$ by a marginal-catalytic or correlated-catalytic free transformation, it holds that $\mfR(\rho)\geq \mfR(\rho')$.

\bigskip

\begin{proof}
Suppose $\rho$ can be transformed to $\rho'$ by a marginal-catalytic free transformation.
Then, there exists a catalyst $\otimes_{i=0}^{K-1}\tau_{C^{(j)}}$ and free operation $\mE\in\mO_\mF$ such that
\bal
 \mE(\rho\otimes\tau_{C^{(0)}}\otimes\dots\otimes\tau_{C^{(K-1)}}) &= \rho'\otimes \tau_{C^{(0)}\dots C^{(K-1)}}\\
 \Tr_{\overline{C^{(j)}}}\tau_{C^{(0)}\dots C^{(K-1)}}&=\tau_{C^{(j)}},\ \forall j.
\eal

Then, we get
\bal
 \mfR(\rho)+\sum_{i=0}^{K-1}\mfR(\tau_{C^{(j)}})&=\mfR(\rho\otimes\tau_{C^{(0)}}\otimes\dots\otimes\tau_{C^{(K-1)}})\\
 &\geq \mfR(\mE(\rho\otimes\tau_{C^{(0)}}\otimes\dots\otimes\tau_{C^{(K-1)}}))\\
 &= \mfR(\rho'\otimes \tau_{C^{(0)}\dots C^{(K-1)}}')\\ 
 &\geq \mfR(\rho')+\sum_{i=0}^{K-1}\mfR(\tau_{C^{(j)}}),
\eal
where we used the tensor-product additivity in the first line, monotonicity under free operations in the second line, and the tensor-product additivity and the superadditivity in the fourth line.   
The case for the correlated catalysts can be shown similarly. 

\end{proof}


\section{Proof of Proposition~\ref{pro:sufficient}}

{\it Proposition~\ref{pro:sufficient}}:
For any given $\mF$ and $\mO_\mF$, suppose that $\mO_\mF$ includes the relabeling of the classical register and free operations conditioned on the classical register.
Then, if $\rho$ is asymptotically transformable to $\rho'$, there exists a free transformation from $\rho$ to $\rho'$ with a correlated catalyst as well as marginal catalysts with an arbitrarily small error. 

\bigskip

\begin{proof}

A proof for the correlated catalysts essentially follows the technique introduced in Ref.~\cite{Shiraishi2021quantum}. (See also Fig.~\ref{fig:swap}.)
We say that $\rho$ can be asymptotically transformed to $\rho'$ if for any $\epsilon>0$ there exists an integer $n$ and a free operation $\mE\in\mO_\mF$ such that 
\bal
 \frac{1}{2}\|\mE(\rho^{\otimes n})-{\rho'}^{\otimes n}\|_1\leq \epsilon.
 \label{eq:error asymptotic}
\eal
Although the spaces that $\rho$ and $\rho'$ act on can be different, we can always choose a larger space $S$ that contains the supports of both states.  
Thus, we can set $\rho,\rho'\in\mD(S)$ without the loss of generality, and in particular, $\rho^{\otimes n},{\rho'}^{\otimes n}\in\mD(\otimes_{i=1}^{n} S_i)$ using the isomorphic subspaces $S_i\cong S,\forall i$. 
Let $\Xi\coloneqq\mE(\rho^{\otimes n})$ be a state satisfying \eqref{eq:error asymptotic} and $\Xi_i\coloneqq\Tr_{i+1\dots n}\Xi$ be the marginal state of $\Xi$. 
We define a catalyst $\tau$ as
\bal
 \tau = \frac{1}{n}\sum_{k=1}^{n}\rho^{\otimes k-1}\otimes \Xi_{n-k}\otimes\dm{k}_R
\eal
where $\rho^{\otimes k-1}\in\mD(S_2\otimes\dots\otimes S_{k})$, $\Xi_{n-k}\in\mD(S_{k+1}\otimes\dots\otimes S_{n})$, and $\{\ket{k}_R\}$ is a set of orthonormal states defined in the classical register system $R$.
We consider applying a free operation to an initial state of the form 
\bal
 \rho\otimes\tau = \frac{1}{n}\sum_{k=1}^{n}\rho^{\otimes k}\otimes \Xi_{n-k}\otimes\dm{k}_R,
\eal
which acts on $S_1\otimes\dots\otimes S_{n}\otimes R$.
Let $\mM_k(\cdot)=\dm{k}\cdot\dm{k}$, $\mM_{\overline k}(\cdot)=(\mbI-\dm{k})\cdot(\mbI-\dm{k})$ be projectors acting on $R$ and define a channel $\mE_{\rm cond}$ as 
\bal
 \mE_{\rm cond}=\mE\otimes \mM_{n} + \id_{S_1\dots S_{n}}\otimes \mM_{\overline{n}}.
 \label{eq:conditional free}
\eal
This corresponds to a conditional application of the free operation $\mE$ when the classical register has $k=n$ but does nothing otherwise.
By assumption, $\mE_{\rm cond}$ is a free operation.
Let us also define $\mR(\cdot)$ to be the operation that cyclically shifts the classical labels as $1\to 2, 2\to 3,\dots, n\to 1$.
We apply this relabeling locally to the classical register by 
\bal
 \mE_{\rm relab}=\id_{S_1\dots S_{n}}\otimes \mR,
 \label{eq:relabeling}
\eal
which is also a free operation by assumption. 
Finally, we cyclically swap the systems by $\mE_{\rm swap}$ that shifts $S_i\to S_{i+1}$ with $S_{n+1}\coloneqq S_1$.
Since $S_{1},\dots,S_{n}$ are all isomorphic to each other, swapping these systems are also free~\cite{Chitambar2019quantum}.

The concatenation of these channels $\Lambda\coloneqq\mE_{\rm swap}\circ\mE_{\rm relab}\circ\mE_{\rm cond}$ satisfies that 
\bal
 \Tr_{\overline{S}_1}\Lambda(\rho\otimes\tau)=\frac{1}{n}\sum_{k=1}^{n}\Tr_{\overline{k}}\Xi
\eal
while
\bal
 \Tr_{S_1}\Lambda(\rho\otimes\tau)=\tau.
\eal
Since 
\bal
 \frac{1}{2}\|\Tr_{\overline{S}_1}\Lambda(\rho\otimes\tau)-\rho'\|_1&= \frac{1}{2}\left\|\frac{1}{n}\sum_{k=1}^{n}\Tr_{\overline{k}}\Xi-\rho'\right\|_1\\
 &\leq \frac{1}{2n}\sum_{k=1}^{n}\left\|\Tr_{\overline{k}}\Xi-\rho'\right\|_1\\
 &\leq \frac{1}{2n}\sum_{k=1}^{n}\left\|\Xi-{\rho'}^{\otimes n}\right\|_1\\
 &\leq \epsilon,
\eal
the free operation $\Lambda$ allows for a desired catalytic transformation with correlated catalyst $\tau$ with accuracy $\epsilon$.

Finally, the above correlated-catalytic free transformation can be turned into a marginal catalytic free transformation by the same technique used at the end of the proof for Theorem~\ref{thm:transform catalyst}, namely adding another catalytic state that contains the final state and swapping this into the system. 

\begin{figure}
    \centering
    \includegraphics[width=0.6\textwidth]{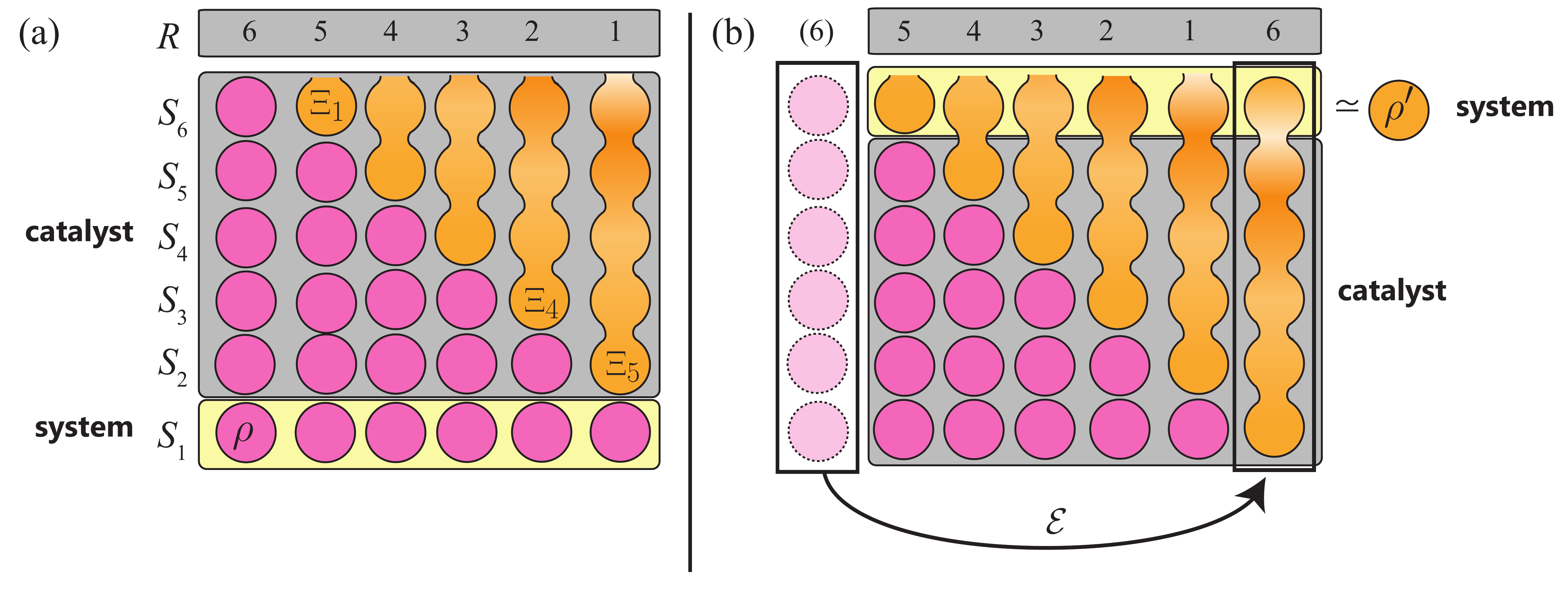}
    \caption{Schematic of how asymptotic transformation reduces to correlated-catalytic transformation.
    Note that the register system $R$ is included in the catalyst.
    We apply the asymptotic transformation on the composite system of the system ($S_1$) and the catalyst ($S_2\sim S_6$) if the state of the register system is $\ket{6}$.
    }
    \label{fig:swap}
\end{figure}

\end{proof}

\begin{remark}
We remark that the assumption on the ability to relabel the classical register and to apply free operations conditioned on the classical register is a very mild one and is satisfied by essentially all important resource theories such as entanglement, (speakable and unspeakable) coherence, and thermal non-equilibrium.
\end{remark}

\section{Correlated-catalytic coherence transformation}

Here, we formally state our conjecture and present some relevant observations.
Firstly, it is important to notice that exactly the same power we get for the marginal catalysts cannot be expected for correlated catalysts because the coherence no-broadcasting theorem prohibits one from creating non-zero coherence from incoherent states by correlated-catalytic covariant transformations, whereas marginal-catalytic transformations allow this as we showed in Theorem~\ref{thm:transform catalyst}.  
The question is whether we can realize arbitrary transformations with the help of correlated catalysts if the initial states have non-zero coherence. 

As introduced in the main text, we define the set of index pairs specifying the energy differences for which $\rho$ has non-zero coherence as  
\bal
\mI(\rho):=\lset (i,j) \sbar {}_S\bra{i}\rho\ket{j}_S\neq 0\rset
\eal
and also introduce the set of index pairs for the energy differences in the target system $S'$ that match a linear combination of integer multiples of the energy differences specified by $\mI(\rho)$ as 
\bal
 \mJ(\rho):=\lset(i',j')\sbar E_{S',j'}-E_{S',i'}=\sum_{(k,l)\in\mI(\rho)} m_{i'j'}^{kl} (E_{S,l}-E_{S,k}),\  m_{i'j'}^{kl}\in\mbZ\rset.
\label{eq:label pair whole}
\eal
We recast the conjecture.

\begin{quote} 
{\bf Conjecture \ref{conj:transform catalyst}.}
For any $\rho\in\mD(S)$, $\rho'\in\mD(S')$ and $\epsilon>0$, $\rho$ can be transformed to a state $\rho'_\epsilon\in\mD(S')$ such that $\frac{1}{2}\|\rho'-\rho'_\epsilon\|_1\leq \epsilon$ by a correlated-catalytic covariant transformation if and only if $\mI(\rho')\subseteq\mJ(\rho)$.
\end{quote}

Below, we present three observations on why this conjecture can be expected. 

\bigskip

\textit{\textbf{Quasi-correlated-catalytic transformations.}}
---
In general, marginal-catalytic transformations may generate correlation between different catalysts at the end of the protocol. 
However, by taking a close look at our protocol for Theorem~\ref{thm:transform catalyst}, one may notice that only the operations in Step~1 actively create correlation in the catalytic systems.
The other operations only create correlation with the main system or the ancillary system outside the catalytic systems, and the correlation in the catalytic systems arises when the other systems are traced out; this is intuitively a `passive' correlation generated through the `active' correlations between the catalytic systems and the external systems. 
We formalize this idea by considering a class of transformations realized by sequences of correlated-catalytic transformations, which we call \emph{quasi-correlated-catalytic} covariant transformations.

\begin{defn}

Suppose $\mE:\mD(SC^{(0)}\dots C^{(K-1)})\rightarrow \mD(S'C^{(0)}\dots C^{(K-1)})$ is a covariant operation realized as $\mE=\Lambda_{K-1}\circ\dots\circ\Lambda_0$ where each $\Lambda_i:X_iC^{(i)}\to Y_iC^{(i)}$ is a covariant operation constructing a correlated-catalytic covariant transformation from $X_i$ to $Y_i$, both of which are non-catalytic systems, with the catalytic system $C^{(i)}$.
Then, if such $\mE$ can bring $\rho\in\mD(S)$ to $\rho'\in\mD(S')$ as $\Tr_{\overline{S'}}\mE(\rho\otimes\tau_{C^{(0)}}\dots\otimes\tau_{C^{(K-1)}})= \rho'$, we say that $\rho$ can be transformed to $\rho'$ by a quasi-correlated-catalytic covariant transformation.

\end{defn}

Note that if we employ an ancillary system whose initial state is incoherent (such as the ancillary system $R$ used for the protocol in Theorem~\ref{thm:transform catalyst}), it can also be taken as a subsystem of $X_i$ and $Y_i$ above; as long as $\Lambda_i$ keeps the state in $C^{(i)}$ intact, $\Lambda_i$ can serve as a constituent of a quasi-correlated-catalytic transformation. 

The following result shows that the `if' part of the conjecture holds for this extended class of correlated-catalytic transformations. 

\begin{pro}
For any $\rho\in\mD(S)$, $\rho'\in\mD(S')$ and $\epsilon>0$, $\rho$ can be transformed to a state $\rho'_\epsilon\in\mD(S')$ such that $\frac{1}{2}\|\rho'-\rho'_\epsilon\|_1\leq \epsilon$ by a quasi-correlated-catalytic covariant transformation if $\mI(\rho')\subseteq\mJ(\rho)$.
\end{pro}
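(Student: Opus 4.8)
The plan is to mimic the three-step protocol of Theorem~\ref{thm:transform catalyst}, but replace each \emph{marginal}-catalytic building block by a \emph{correlated}-catalytic one, so that the whole concatenation becomes a quasi-correlated-catalytic covariant transformation. The extra hypothesis $\mI(\rho')\subseteq\mJ(\rho)$ enters only in one place: it guarantees that, instead of building coherence ``from scratch'' as in Step~1 of Theorem~\ref{thm:transform catalyst}, we can \emph{extract} the needed seed coherence from the initial state $\rho$ itself. First I would fix a pair $(k,l)\in\mI(\rho)$, so that $\rho$ has nonzero coherence between energy levels $E_{S,k}$ and $E_{S,l}$; by restricting to the two-dimensional subspace spanned by $\ket{k}_S,\ket{l}_S$ and using a covariant operation (a conditional projection followed by a covariant unitary that transfers this off-diagonal element to an ancillary qubit $R_0$ with Hamiltonian $(E_{S,l}-E_{S,k})\dm{1}_{R_0}$), one creates a state $\Sigma(\eta_0)$ with $\eta_0>0$ on $R_0$, without touching any catalyst at all. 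This is a genuine covariant operation, not even catalytic, so it trivially fits into the quasi-correlated framework.

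Next I would amplify: apply the two-level coherence amplification subroutine (the channel with Kraus operators in \eqref{eq:two-level amplification Kraus}) repeatedly on $R_0$, each round using a \emph{fresh} catalyst $\Gamma(\eta_j)$ in a fresh catalytic system $C^{(j)}$, and crucially transferring the amplified coherence off $R_0$ onto a separate accumulator register \emph{after each single round} before reusing the scheme. Because each $\Lambda_j:\,X_jC^{(j)}\to Y_jC^{(j)}$ is a one-shot correlated-catalytic covariant transformation (by \eqref{eq:two-level amplification}, a single application of $\mE_{SC^{(j)}}$ on $\Sigma(\eta)\otimes\Gamma(\eta)$ returns the catalyst $\Gamma(\eta)$ exactly as its marginal), the concatenation is exactly of the form demanded in the definition of a quasi-correlated-catalytic transformation, with each catalyst used only once. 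Iterating this drives $\eta_j\to 1$, producing a register state arbitrarily close to $\ket{+}$ at energy gap $E_{S,l}-E_{S,k}$; preparing $L$ copies and, if needed, chaining amplifiers at integer multiples of this gap (allowed since $\mJ(\rho)$ is closed under integer linear combinations) yields coherent resource states for every energy transition in $\mI(\rho')$. Then Step~3 of Theorem~\ref{thm:transform catalyst} applies verbatim: these resource states (which are ancillary, not catalytic, so need not be restored) are consumed by a covariant channel to approximately implement the unitary $V$ with $\ket{\psi'}=V\ket{j^\star}_{S'}$, and a probabilistic mixture handles general $\rho'$. Note that the final ``swap out the correlation'' trick from Theorem~\ref{thm:transform catalyst} is \emph{not} available here (it would recorrelate a catalyst with $S'$), so one only gets $\Tr_{\overline{S'}}\mE(\cdots)=\rho'_\epsilon$, which is all the proposition claims.

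The main obstacle is bookkeeping the quasi-correlated structure honestly rather than the coherence analysis, which is inherited from earlier results. Specifically, I must ensure that whenever a catalyst $C^{(i)}$ is acted on by its operation $\Lambda_i$, the input on $C^{(i)}$ is \emph{exactly} the product catalyst $\tau_{C^{(i)}}=\Gamma(\eta_i)$ uncorrelated with everything else — this is why one cannot naively reuse a catalyst and why the amplified coherence must be offloaded to a non-catalytic accumulator between rounds. The subtlety flagged in the Supplemental Material on the two-level amplification (that $\mE_{SC^{(k)}}$ can generate $S$--$C^{(k')}$ correlation through a pre-correlated catalyst) is precisely the trap to avoid; the remedy is that in a quasi-correlated transformation each $\Lambda_i$ touches only one catalytic system, which never re-enters the protocol, so no such pre-correlation is ever present. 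The remaining error estimates — closeness of $\Sigma(\eta_K)$ to $\ket{+}$ after finitely many amplification rounds, and closeness of the implemented unitary to $V$ controlled by the overlap $\Tr[\Delta_{\tilde R_k^L}\dm{+}^{\otimes L}]\to 1$ — are exactly those already established in the proof of Theorem~\ref{thm:transform catalyst} and can be cited.
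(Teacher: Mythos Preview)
Your overall architecture matches the paper's --- extract seed coherence from $\rho$, amplify with fresh single-use catalysts, then consume the resource states to implement a unitary --- and the amplification step already fits the quasi-correlated definition without any ``accumulator'' offloading: since each $C^{(j)}$ enters in product with everything else, the marginal on $R_0C^{(j)}$ before $\mE_{R_0C^{(j)}}$ is exactly $\Sigma(\eta_j)\otimes\Gamma(\eta_j)$ regardless of correlations between $R_0$ and earlier catalysts, so the output marginal on $C^{(j)}$ is exactly $\Gamma(\eta_j)$. But two genuine pieces are missing.

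First, fixing a \emph{single} pair $(k,l)\in\mI(\rho)$ is not enough. The set $\mJ(\rho)$ consists of integer linear combinations over \emph{all} pairs in $\mI(\rho)$; if $\mI(\rho)=\{(1,2),(2,3)\}$ with incommensurate gaps, a target transition $(E_{S,2}-E_{S,1})+(E_{S,3}-E_{S,2})$ lies in $\mJ(\rho)$ but is not an integer multiple of either basic gap. The paper therefore extracts seed coherence for \emph{every} $(i,j)\in\mI(\rho)$ by sequentially applying transfer unitaries $U_{SR_{ij}}$, and must prove a lemma that each extraction preserves the nonzero off-diagonals of the running state on $S$, so that later extractions still find something to transfer. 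Second, Step~3 of Theorem~\ref{thm:transform catalyst} does \emph{not} apply verbatim. There, the resource states sit at the \emph{target}-system gaps $E_{S',i}-E_{S',j^\star}$, so one shift operator per level suffices in $K_0$. Here the resource states sit at the \emph{source}-system gaps, and each target transition must be synthesized as $\sum_{(i,j)\in\mI(\rho)}m^{ij}(E_{S,j}-E_{S,i})$. The paper handles this by partitioning $\{0,\dots,d_{S'}-1\}$ into maximal closed index sets (so that $\rho'$ block-diagonalizes under the hypothesis $\mI(\rho')\subseteq\mJ(\rho)$) and replacing the single shift in $K_0$ by a product $\bigotimes_{(i,j)\in\mI(\rho)}\Delta_{\tilde R_{ij}}(m^{ij})$; this in turn requires the estimate $\Tr[\Delta(m)\dm{+}^{\otimes L}]\to 1$ for general integer shifts $m$, not just $m=1$.
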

\begin{proof}

The proof strategy is similar to the one for Theorem~\ref{thm:transform catalyst}. 
However, since Step~1 in the proof of Theorem~\ref{thm:transform catalyst} uses operations that  actively create correlation inside the catalytic subsystems, we need to employ a different strategy to create small coherence in the ancillary system. 
Here, we instead extract a small coherence from the initial state $\rho$, which has non-zero coherence for the energy differences specified by the index pairs in $\mI(\rho)$.
We will show that one can use this `seed' coherence to cover coherences for other energy differences that are integer multiples of the seed and linear combinations thereof. 
To make the comparison to Theorem~\ref{thm:transform catalyst} explicit, we break down our protocol into three steps.

\textit{Step~1: Creating small coherence.}\quad 
For every $(i,j)\in\mI(\rho)$, we prepare $\ket{0}_{R_{ij}}$, the energy eigenstate acting on a two-level system $R_{ij}$ with Hamiltonian
\bal
 H_{R_{ij}} = (E_{S,j}-E_{S,i}) \dm{1}_{R_{ij}}. 
\eal
Let $U_{SR_{ij}}$ be a unitary matrix acting on $SR_{ij}$ defined as 
\bal
U_{S R_{ij}}:=\dm{i0}+\dm{j1}+\frac{1}{\sqrt{2}}(\ketbra{i1}{i1}+\ketbra{i1}{j0}+\ketbra{j0}{i1}-\ketbra{j0}{j0})+\mbI_{\bar i \bar j}\otimes \mbI_{R_{ij}}
\label{eq:unitary to create seed}
\eal
where $\mbI_{\bar i \bar j}=\sum_{k\neq i,j}\dm{k}$ is the identity matrix acting on the subspace that does not intersect ${\rm span}\{\ket{i},\ket{j}\}$. 
Since this preserves the total energy, the application of this unitary is covariant. 
However, because of the nontrivial transformation within the degenerate subspace ${\rm span}\{\ket{i1}, \ket{j0}\}$, non-zero coherence can be transferred from $\rho_{S}$ to $\ket{0}_{R_{ij}}$. 
Then, by sequentially applying this unitary over the state in $S$ and other $R_{ij}$'s, we can realize non-zero coherence on every $R_{ij}$.
Below, we make this intuition precise.

For the explanatory purpose, we arrange the indices in $\mI(\rho)$ in some order (e.g., lexicographic order) and if $(i,j)$ is the $k$\,th element in $\mI(\rho)$, we call the system associated with this pair of indices as $R^{(k)}:=R_{ij}$. (Here, we let $k$ begin with $k=1$, i.e., $k=1,\dots,|\mI(\rho)|$ where $|\mI(\rho)|$ is the number of elements in $\mI(\rho)$.)
Letting $\rho_{S \{R^{(k)}\}}^{(0)}:=\rho_{S}\otimes\dm{0}_{R^{(1)}}\otimes\dots\otimes\dm{0}_{R^{(|\mI(\rho)|)}}$, we sequentially apply the unitary in \eqref{eq:unitary to create seed} to obtain
\bal
 \rho_{S \{R^{(k)}\}}^{(l+1)} = (\mbI_{\bar R^{(l+1)}} \otimes U_{S R^{(l+1)}}) \rho_{S \{R^{(k)}\}}^{(l)} (\mbI_{\bar R^{(l+1)}} \otimes U_{S R^{(l+1)}}^\dagger)
 \label{eq:asymmetry distribution}
\eal
for $l=0,1,\dots,|\mI(\rho)|-1$, where $\mbI_{\bar R^{(l+1)}}$ is the identity matrix acting on the reference systems $R^{(1)}\dots R^{(|\mI(\rho)|)}$ except $R^{(l+1)}$.
From now on, we simply write the label for the relevant system in the subscript to refer to the reduced state acting on the specified system, e.g., $\rho_{R^{(l)}}^{(l)}$ refers to the reduced state obtained by taking the partial trace of $\rho_{S \{R^{(k)}\}}^{(l)}$ over systems other than $R^{(l)}$.

The following lemma states that this protocol creates non-zero coherence in every reference system.

\begin{lem}
 For any $l=1,\dots,|\mI(\rho)|$, it holds that $\bra{0}\rho_{R^{(l)}}^{(l)}\ket{1}\neq 0$. 
\end{lem}
\begin{proof}
Let $(i,j)\in\mI(\rho)$ be the $l$+1\,th element of $\mI(\rho)$, i.e., $R^{(l+1)}=R_{ij}$. 
First, note that 
\bal
 \bra{0}\rho_{R_{ij}}^{(l+1)}\ket{1}&=\bra{0}\Tr_{S}\left[ U_{S R^{(l+1)}}\left( \rho_{S}^{(l)}\otimes \dm{0}_{R^{(l+1)}}\right) U_{S R^{(l+1)}}^\dagger \right] \ket{1}\\
 &=\frac{1}{\sqrt{2}}\bra{i}\rho_{S^{(l)}}\ket{j}.
\eal
This implies that if $\bra{i}\rho_{S}^{(l)}\ket{j}\neq 0$, it always holds that $\bra{0}\rho_{R_{ij}}^{(l+1)}\ket{1}\neq0$. 
Thus, it suffices to show that
\bal 
\mI(\rho)=\mI(\rho_{S}^{(0)})\subseteq \mI(\rho_{S}^{(l)})
\eal
for any $l$ to prove the statement.

For this, we prove that the application of unitary in \eqref{eq:unitary to create seed} always maps non-zero off-diagonal elements in $S$ to non-zero off-diagonal elements in $S$, implying that $\mI(\rho_{S}^{(l)})\subseteq \mI(\rho_{S}^{(l+1)})$.
Since only $i,j$\,th rows and columns of $\rho_{S}^{(l)}$ can get affected by the unitary in \eqref{eq:unitary to create seed}, it suffices to show that if all of the off-diagonal elements of $\rho_S^{(l)}$ with $i$, $j$, and $k$-th rows and columns are non-zero, then those of $\rho_S^{(l+1)}$ are also non-zero.
This can be directly checked as  
\bal
 \bra{i}\rho_{S}^{(l+1)}\ket{k} &= \bra{i}\Tr_{R_{ij}}\left[ U_{S R_{ij}}\left( \rho_{S}^{(l)}\otimes \dm{0}_{R_{ij}}\right) U_{S R_{ij}}^\dagger\right]\ket{k}\\
 &= \bra{i}\rho_{S}^{(l)}\ket{k}
\eal
and
\bal
 \bra{j}\rho_{S}^{(l+1)}\ket{k} &= \bra{j}\Tr_{R_{ij}}\left[ U_{S R_{ij}}\left( \rho_{S}^{(l)}\otimes \dm{0}_{R_{ij}}\right) U_{S R_{ij}}^\dagger\right]\ket{k}\\
 &= -\frac{1}{\sqrt{2}}\bra{j}\rho_{S}^{(l)}\ket{k}
\eal
for $k\neq i,j$, and
\bal
 \bra{i}\rho_{S}^{(l+1)}\ket{j} &= \bra{i}\Tr_{R_{ij}}\left[ U_{S R_{ij}}\left( \rho_{S}^{(l)}\otimes \dm{0}_{R_{ij}}\right) U_{S R_{ij}}^\dagger\right]\ket{j}\\
 &= -\frac{1}{\sqrt{2}}\bra{i}\rho_{S}^{(l)}\ket{j},
\eal
which concludes the proof.
\end{proof}

\textit{Step~2: Amplifying coherence.}\quad
The amplification of the coherence in $R_{ij}$ for $(i,j)\in\mI(\rho)$ is accomplished in the same way as Step~2 in the protocol for Theorem~\ref{thm:transform catalyst}.
Importantly, since every step in the two-level coherence amplification is applied over a catalytic subsystem and the ancillary system, each step in Step~2 is done by an operation that does not actively create coherence in the catalytic subsystems. 

\textit{Step~3: Prepare the target state.}\quad
Similarly to Step~3 for Theorem~\ref{thm:transform catalyst}, we repeat Step~1 and 2 for $L(\gg 1)$ times to prepare $\ket{+}_{R_{ij}}^{\otimes L}$ for each $(i,j)\in\mI(\rho)$ and use it as an ancillary coherent resource state to implement a desired unitary on the main system to prepare the target state. 
However, since we only have coherence for the energy differences corresponding to $\mI(\rho)$ unlike the case of Theorem~\ref{thm:transform catalyst}, we need to carefully consider how we prepare a state close to $\rho'$, which has coherence for energy differences in $\mJ(\rho)$. 
Intuitively, combining the multiple applications of the energy-shifting operator \eqref{eq:energy-shifting} on different ancillary subsystems holding coherences for different energy differences should allow us to realize the coherences in $\mJ(\rho)$.
Below, we make this intuition precise.

Let $\mS\subseteq\{0,\dots,d_{S'}-1\}$ be a subset of integers. 
We call $\mS$ \emph{closed index set} if $(i,j)\in\mJ(\rho),\forall i,j\in\mS$.
Since there exist many closed index sets in general, it is convenient to consider the largest sets of this type; we specifically call $\mS$ \emph{maximal closed index set} if for any $i\not\in\mS$, there exists $j\in\mS$ such that $(i,j)\not\in\mJ(\rho)$.
We first show that the set of maximal closed index sets completely partitions $\{0,\dots,d_{S'}-1\}$.

\begin{lem} \label{lem:maximal index sets}
For any $i\in\{0,\dots,d_{S'}-1\}$, $i$ belongs to one and only one maximal closed index set.  
\end{lem}
\begin{proof}
Firstly, $i\in\{0,\dots,d_{S'}-1\}$ belongs to at least one maximal closed index set because $\{i\}$ is always a closed index set, and any closed index set is a subset of some maximal closed index set. 
Thus, it suffices to show that $i$ only belongs to one maximal index set. 
Let us assume, to the contrary, that $i$ belongs to two distinct maximal closed index sets $\mS_0$ and $\mS_1$.
Since $\mS_0\not\subset\mS_1$ and $\mS_1\not\subset\mS_0$ by the assumption that $\mS_0$ and $\mS_1$ are maximal closed index sets, each set has at least two elements. 
Pick two distinct integers $j\in\mS_0\setminus\mS_1$ and $k\in\mS_1$. 
Then, there exist sets of integers $\{m_{ij}^{ln}\}_{ln}$ and $\{m_{ik}^{ln}\}_{ln}$ such that 
\bal
 E_{S',j}-E_{S',i}&=\sum_{(l,n)\in\mI(\rho)} m_{ij}^{ln}(E_{S,n}-E_{S,l})\\
 E_{S',k}-E_{S',i}&=\sum_{(l,n)\in\mI(\rho)} m_{ik}^{ln}(E_{S,n}-E_{S,l}).
\eal
However, since $E_{S',k}-E_{S',j}=(E_{S',k}-E_{S',i})-(E_{S',j}-E_{S',i})$, we also get 
\bal
 E_{S',k}-E_{S',j}=\sum_{(l,n)\in\mI(\rho)} \left(m_{ik}^{ln}-m_{ij}^{ln}\right)(E_{S,n}-E_{S,l}),
\eal
implying that $(j,k)\in\mJ(\rho)$.
Since this holds for any $k\in\mS_1$ for fixed $j$, we get that $j\in\mS_1$, which is a contradiction. 
\end{proof}

Lem.~\ref{lem:maximal index sets} implies that maximal index sets $\mS_0,\mS_1,\dots\mS_T$ with $T\leq d_{S'}-1$ completely partition $\{0,\dots,d_{S'}-1\}$.
As a result, the Hilbert space $\mH'$ underlying the target system $S'$ can be decomposed as $\mH'=\bigoplus_{l}\mH_l$ where $\mH_l:={\rm span}\lset\ket{i}\sbar i\in\mS_l\rset$. 
Then, if $\mI(\rho')\subseteq\mJ(\rho)$, the definition of closed label sets allows $\rho'$ to admit a block diagonal form 
\bal
 \rho'= \bigoplus_l p_l \rho_l',\ \ \supp(\rho_l')\subseteq\mH_l.
\eal

It now suffices to show that each $\rho_l'$ can be obtained with an arbitrary accuracy by covariant operation applied over $S'$ and $R$, since $\rho'$ can be realized by stochastically prepare $\rho_l'$ at probability $p_l$, and any operation that solely acts on the systems outside the catalytic subsystems clearly does not actively create correlation in the catalytic subsystems.
(Note that, as in the case for Theorem~\ref{thm:transform catalyst}, we discard the ancillary system $R$ at the end of the protocol, and it is not a catalytic subsystem.)
Moreover, it suffices to show that any pure state acting on $\mH_l$ can be obtained, because then any mixed state $\rho_l'$ can be obtained by stochastically preparing the pure states that appear in a convex combination that constitutes $\rho_l'$.
Thus, we focus on the case where $\rho_l$ is pure for all $l$. 
In this case, for any $l$ and another arbitrary pure state $\ket{\psi_l}\in\mH_l$, there exists a unitary $V_l$ acting on $\mH_l$ such that
\bal
\rho_l'=V_l\dm{\psi_l}_{S'}V_l^\dagger.
\label{eq:goal pure}
\eal 
Specifically, we choose $\ket{\psi_l}=\ket{j_l^\star}$ where $\ket{j_l^\star}$ is an energy eigenstate of $H_{S'}$ and $j_l^\star$ is an arbitrary integer such that $j_l^\star\in\mS_l$.
For any $k\in\mS_l$, let $\left\{m_{kj_l^\star}^{ij}\right\}_{ij}$ be a set of integers such that 
\bal
E_{S',j_l^\star}-E_{S',k}=\sum_{ij}m_{kj_l^\star}^{ij}(E_{S,j}-E_{S,i}),
\label{eq:energy resonance subspace}
\eal
whose existence is ensured by the definition of $\mS_l$ and \eqref{eq:label pair whole}. 

Similarly to the procedure for Theorem~\ref{thm:transform catalyst}, we embed the state on $R_{ij}^{\otimes L}$ into another system $\tilde R_{ij}^{L}$ with Hamiltonian 
\bal
H_{\tilde R_{ij}^{L}}=H_{R_{ij}^{\otimes L}}+(E_{S',j}-E_{S',i})\left(\sum_{t=-M_{ij}}^{-1}t\dm{t}+\sum_{t=L+1}^{L+M_{ij}}t\dm{t}\right),
\eal
where $M_{ij}\coloneqq\max_{k\in\mS_l} m_{kj_l^\star}^{ij}$ and $\ket{t}$ with $t\not\in\{0,\dots,L\}$ is an eigenstate of the new Hamiltonian orthogonal to all the eigenstates of $H_{R_i^{\otimes L}}$. 
Let $\Delta_{A}$ be the energy-shifting operator acting on system $A$ defined as 
$\Delta_{A}(j) = \sum_{k=0}^{L} \ketbra{k-j}{k}$
and $P_{\tilde R_{ij}^L}^{0:L}\coloneqq\sum_{n=0}^L\dm{n}_{\tilde R_{ij}^L}$ where $\ket{n}$ is the eigenstate of $H_{\tilde R_{ij}^L}$ as in \eqref{eq:resource binomial}.
Then, we consider a covariant operation $\mE_{V_l}:\mD(\mH_l\otimes\tilde R)\rightarrow\mD(\mH_l\otimes\tilde R)$ with Kraus operators

\bal
 K_0 := \sum_{k\in\mS_l}\dm{k}V_l\dm{j_l^\star}\bigotimes_{(i,j)\in\mI(\rho)}\Delta_{\tilde R_{ij}}\left(m_{kj_l^\star}^{ij}\right) 
\eal
\bal
 K_1 := (\mbI_{\mH_l}-\dm{j_l^\star})\bigotimes_{(i,j)\in\mI(\rho)}P_{\tilde R_{ij}^L}^{0:L}+\mbI_{\mH_l}\otimes\left[\mbI-\bigotimes_{(i,j)\in\mI(\rho)}P_{\tilde R_{ij}^L}^{0:L}\right]
\eal

If we apply the above channel to $\dm{j_l^\star}\bigotimes_{(i,j)\in\mI(\rho)}\dm{+}_{\tilde R_{ij}^L}^{\otimes L}$ and take the partial trace over $\tilde R$ systems, $K_0$ deterministically clicks and gives  
\bal
\Tr_{\tilde R}&\left[K_0 \left( \dm{j_l^\star}\bigotimes_{(i,j)\in\mI(\rho)}\dm{+}_{\tilde R_{ij}^L}^{\otimes L} \right) K_0^\dagger\right] \\&= \sum_{k\in\mS_l} \sum_{p\neq k} \dm{k}V\dm{j_l^\star}V^\dagger\dm{p} \prod_{(i,j)\in\mI(\rho)}\Tr\left[\Delta_{\tilde R_{ij}^L}\left(m_{kj_l^\star}^{ij}-m_{pj_l^\star}^{ij}\right)\dm{+}_{\tilde R_{ij}^L}^{\otimes L}\right]+\sum_{k\in\mS_l} \dm{k}V\dm{j_l^\star}V^\dagger\dm{k}
\eal

Analogously to Theorem~\ref{thm:transform catalyst}, it suffices to show $\Tr\left[\Delta_{\tilde R_{ij}^L}\left(m\right)\dm{+}_{\tilde R_{ij}^L}^{\otimes L}\right]\to 1$ as $L\to\infty$ for every $(i,j)\in\mI(\rho)$ and $|m|\leq 2 M_{ij}$. 
We have 
\bal
 \Tr\left[\Delta_{\tilde R_{ij}^L}(m)\dm{+}_{\tilde R_{ij}^L}^{\otimes L}\right]&=\frac{1}{2^L}\sum_{n=0}^{L-|m|}\binom{L}{n+|m|}^{1/2}\binom{L}{n}^{1/2}.
 \label{eq:shift overlap quasi}
\eal
It holds that 
\bal
\begin{cases}
 \binom{L}{n+|m|}\geq \binom{L}{n} & n\leq  \frac{L}{2}-\frac{|m|}{2} \\
 \binom{L}{n}\geq \binom{L}{n+|m|} & n\geq  \frac{L}{2} -\frac{|m|}{2}
\end{cases}.
\eal
When $L$ is even, \eqref{eq:shift overlap quasi} can be bounded as 
\bal
 \Tr\left[\Delta_{\tilde R_{ij}^L}(m)\dm{+}_{\tilde R_k^L}^{\otimes L}\right]
 &=\frac{1}{2^L}\sum_{n=0}^{L-|m|} \binom{L}{n}^{1/2}\binom{L}{n+|m|}^{1/2} \\
 &\geq \frac{1}{2^L}\sum_{n=0}^{\lfloor\frac{L}{2}-\frac{|m|}{2}\rfloor}\binom{L}{n}+\frac{1}{2^L}\sum_{n=\lfloor\frac{L}{2}-\frac{|m|}{2}\rfloor+1}^{L-|m|}\binom{L}{n+|m|}\\
 &= \frac{1}{2^L}\sum_{n=0}^{\lfloor\frac{L}{2}-\frac{|m|}{2}\rfloor}\binom{L}{n}+\frac{1}{2^L}\sum_{n=\lfloor\frac{L}{2}+\frac{|m|}{2}\rfloor+1}^{L}\binom{L}{n}\\
 &=1-\frac{1}{2^L}\sum_{n=\lfloor\frac{L}{2}-\frac{|m|}{2}\rfloor+1}^{\lfloor\frac{L}{2}+\frac{|m|}{2}\rfloor}\binom{L}{n}\\
 &\geq 1-\frac{|m|}{2^L}\binom{L}{L/2}\\
 &\geq 1-\frac{2|m|}{\sqrt{2\pi}}L^{-1/2}-\delta_e\xrightarrow[L\to\infty]{} 1-\delta_e
\eal
for any $\delta_e>0$.

When $L$ is odd, we can follow the same argument to get 
\bal
 \Tr\left[\Delta_{\tilde R_{ij}^L}(m)\dm{+}_{\tilde R_k^L}^{\otimes L}\right]&\geq 1-\frac{|m|}{2^L}\binom{L}{\frac{L+1}{2}}\\
 &\geq 1-\frac{2|m|}{\sqrt{2\pi}}(1+1/L)^{-L/2}(1-1/L)^{-L/2}\left(\sqrt{L}+1/\sqrt{L}\right)^{-1}-\delta_o\xrightarrow[L\to\infty]{} 1-\delta_o
\eal
for any $\delta_o>0$.
These imply $\lim_{L\to\infty}\Tr\left[\Delta_{\tilde R_{ij}^L}\left(m\right)\dm{+}_{\tilde R_{ij}^L}^{\otimes L}\right]= 1$ for every $(i,j)\in\mI(\rho)$ and $|m|\leq 2 M_{ij}$, concluding the proof.

\end{proof}

Besides the state transformability under the quasi-correlated catalytic transformations, we have two other observations relevant to the conjecture.

\bigskip

\textit{\textbf{Qubit transformations.}}
---
We considered a slightly extended class of correlated-catalytic covariant transformations. 
What if we stick to the original one? 
In Ref.~\cite{Ding2021amplifying}, the authors numerically studied the sets of states achievable from a weakly coherent state using correlated-catalytic covariant operations with low-dimensional catalysts.   
They investigated up to three-dimensional catalysts and confirmed that higher-dimensional catalysts admit larger sets of achievable states, including states with larger coherence.  
If this tendency persists to higher catalyst dimensions, it can be expected that any two-dimensional state can be reached from any coherent initial state with arbitrary precision by employing a catalyst of sufficiently large dimensions.

\bigskip

\textit{\textbf{Creating non-zero coherence in higher modes.}}
---
The above observation only applies to qubit systems. 
When higher-dimensional systems are in question, the structure of coherence becomes more involved, as coherences for different energy differences are not comparable in general. 
Nevertheless, we can show that in any three-level system with eigenstates 1, 2, and 3, if the initial state has coherence between 1 and 2, 2 and 3, but not 1 and 3, we can {\it broadcast} the coherence to the level spacing between 1 and 3 with the help of correlated catalyst. 

\begin{pro}
Consider a three-level system with energies $E_1$, $E_2$ and $E_3$ with corresponding energy eigenstates $\ket{E_1}$, $\ket{E_2}$, and $\ket{E_3}$.
Then, for any $3\times 3$ density matrix $\rho$ with $\rho_{13}=\rho_{31}=0$ and other non-zero matrix elements:
\bal
\rho=\begin{pmatrix}
 \rho_{11}& \rho_{12}&0 \\
\rho_{21}&\rho_{22}&\rho_{23} \\
0&\rho_{32}&\rho_{33}
\end{pmatrix},
\eal
there exists a three-state catalytic system $C$ with the same energy as $S$, a state of $C$ denoted by $\tau$, and a covariant channel $\mE$ such that $\sigma =\mE(\rho\otimes \tau)$ with $\Tr_S[\sigma]=\tau$ and $\rho':=\Tr_C[\sigma]$ with $\rho'_{13}\neq 0$.
\end{pro}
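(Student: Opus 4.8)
The plan is to exploit the single resonance $E_3-E_1=(E_2-E_1)+(E_3-E_2)$, i.e.\ the three-level instance of the index condition in Conjecture~\ref{conj:transform catalyst}: the $(1,3)$ mode of the target is the ``sum'' of the $(1,2)$ and $(2,3)$ modes that $\rho$ already carries. Take the catalyst $C$ to be a copy of $S$ (so $H_C$ has eigenvalues $E_1,E_2,E_3$ with eigenstates $\ket{E_1},\ket{E_2},\ket{E_3}$) in the state $\tau:=(1-t(\rho_{22}+\rho_{33}))\,\dm{E_1}+t\,\Pi\rho\Pi$, where $\Pi:=\dm{E_2}+\dm{E_3}$ and $t\in(0,(\rho_{22}+\rho_{33})^{-1}]$; this is a valid density operator (a direct sum of a nonnegative scalar and a rescaled nonnegative principal block of $\rho$) whose only coherence sits in the $(2,3)$ sector, with $\tau_{23}=t\rho_{23}\neq0$. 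For the channel, take $\mE(\cdot)=U(\cdot)U^\dagger$ where $U$ is the permutation on $SC$ that swaps $\ket{E_2}_S\ket{E_3}_C\leftrightarrow\ket{E_3}_S\ket{E_2}_C$ and fixes the other seven product energy eigenvectors. Since these two vectors share the total energy $E_2+E_3$, $U$ stays inside one eigenspace of $H_S\otimes\mbI+\mbI\otimes H_C$, so $[U,H_{SC}]=0$ and $\mE$ is covariant; no ancilla is needed.

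Writing $\sigma:=U(\rho\otimes\tau)U^\dagger$, the crucial computation is $\rho'_{13}=\sum_{k=1}^{3}(\bra{E_1}_S\bra{E_k}_C)\,\sigma\,(\ket{E_3}_S\ket{E_k}_C)=\bra{E_1}\rho\ket{E_2}\,\bra{E_2}\tau\ket{E_3}=\rho_{12}\,\tau_{23}\neq0$, since only the $k=2$ term survives (the others give $\rho_{13}\tau_{kk}=0$). Intuitively the swap glues the $(1,2)$ coherence carried by $S$ onto the $(2,3)$ coherence carried by $C$ and, after tracing out $C$, leaves their product --- a $(1,3)$ coherence --- on $S$; this is ``mode merging'' in miniature.

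The part that needs care, and where the form of $\tau$ is forced, is the catalyst condition $\Tr_S\sigma=\tau$, which I would verify entrywise in the energy basis of $C$. The diagonal entries impose only $\rho_{22}\tau_{33}=\rho_{33}\tau_{22}$ (the population the swap shuttles between the $\ket{E_2}_C$ and $\ket{E_3}_C$ levels must balance), the $(2,3)$ entry imposes $\rho_{23}(\tau_{22}+\tau_{33})=\tau_{23}(\rho_{22}+\rho_{33})$, and every other entry vanishes on both sides because $\tau$ has no $(1,2)$ or $(1,3)$ coherence; the choice $\tau\propto\Pi\rho\Pi$ (up to the weight on $\ket{E_1}$) meets all of these simultaneously. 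The only genuine obstacle is to realize that a \emph{complete} swap of the degenerate pair is needed: a partial rotation through angle $\theta$ would create $(1,3)$ coherence of order $\theta$ but, via the $\rho_{23}\tau_{32}$ cross term, would perturb the catalyst populations at the very same order, so it could not restore $\tau$ exactly. After that observation the verification is short. Finally $\rho'=\Tr_C\sigma$ is automatically a state and $\mE:\mD(SC)\to\mD(S'C)$ with $S'\cong S$, which closes the argument; the same mechanism, with several such degenerate-subspace swaps chained along ancillary ladders to cover all resonances in $\mJ(\rho)$, is what underlies the quasi-correlated-catalytic protocol established above.
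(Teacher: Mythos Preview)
Your proof is correct and considerably simpler than the paper's. The paper first preprocesses $\rho$ covariantly into a fixed symmetric form $\tilde\rho$, then employs a catalyst $\tau$ carrying coherence in \emph{all three} modes (in particular $\tau_{13}\neq 0$) together with a non-unitary channel built from a projection onto the symmetric degenerate subspaces plus several decohering Kraus operators that repair the catalyst diagonals; the resulting $\rho'_{13}\neq 0$ is then interpreted as ``broadcasting'' the $(1,3)$ coherence already present in $\tau$. Your construction bypasses all of this: a single energy-conserving swap $\ket{E_2E_3}\leftrightarrow\ket{E_3E_2}$ and a catalyst with only $(2,3)$ coherence suffice, because the balance conditions $\rho_{22}\tau_{33}=\rho_{33}\tau_{22}$ and $\rho_{23}(\tau_{22}+\tau_{33})=\tau_{23}(\rho_{22}+\rho_{33})$ are solved exactly by taking the lower $2\times 2$ block of $\tau$ proportional to $\Pi\rho\Pi$. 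The mechanism is genuinely different---rather than broadcasting existing $(1,3)$ coherence from the catalyst, you \emph{synthesise} it by merging $\rho$'s $(1,2)$ mode with $\tau$'s $(2,3)$ mode---and the argument is both shorter and more transparent about why the resonance condition $E_3-E_1=(E_2-E_1)+(E_3-E_2)$ underlying Conjecture~\ref{conj:transform catalyst} is the relevant structure. What the paper's approach buys is the explicit demonstration that a catalyst already carrying $(1,3)$ coherence can pass it on without violating no-broadcasting, which is the point the surrounding discussion emphasises; your approach instead shows that no such seed is needed at all.
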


We stress that the ratios of two energy differences of $\Di E_{12}$, $\Di E_{23}$, and $\Di E_{13}$ (e.g., $\Di E_{12}/\Di E_{23}$) are irrational numbers in general.
Our result shows that the coherence for $\Di E_{13}$ in the catalyst $\tau$ can indeed be broadcast to $\rho'$ under the presence of non-zero coherence for $\Di E_{12}$ and $\Di E_{13}$ in $\rho$.
Note that this does not contradict the coherence no-broadcasting theorem~\cite{Lostaglio2019coherence,Marvian2019nobroadcasting}, which prohibits coherence from being broadcast to states that do not have coherence for \emph{any} energy difference.

\begin{proof}
Our channel consists of two maps.
First, we consider a covariant channel $\mE_1$ on $S$ that acts as 
\bal
\mE_1(\rho) = \tilde\rho:= \begin{pmatrix}
\frac14 &\delta & 0 \\
\delta & \frac12 & \delta \\
0 & \delta & \frac14
\end{pmatrix}
\eal
with some real positive number $0<\delta<D$, where the upper bound $D$ is introduced for later use.
This map is easily realized by applying phase rotation on $\ket{E_1}$ and $\ket{E_3}$, decoherence, and a classical stochastic map on diagonal elements.

Then, to construct our second map $\mE_2$ we define the maximally entangled states on $SC$ as
\bal
\ket{e^{\pm}_{ij}}\coloneqq\frac{1}{\sqrt{2}}(\ket{E_iE_j}\pm \ket{E_jE_i})
\eal
with $(i,j)=(1,2), (2,3), (1,3)$.
We set the catalyst in $C$ as
\bal
\tau=\begin{pmatrix}
\frac{1}{10}+\frac{152}{225}\delta^2 & \frac{38}{45}\delta  &\frac{152}{225}\delta^2 \\
\frac{38}{45}\delta &\frac45-\frac{304}{225}\delta^2 & \frac{38}{45}\delta \\
\frac{152}{225}\delta^2 & \frac{38}{45}\delta & \frac{1}{10}+\frac{152}{225}\delta^2
\end{pmatrix}.
\label{sigma}
\eal
The upper bound $D$ is set to keep $\tau$ positive semidefinite. (Since $a\to 0$ with $\delta \to 0$, there exists some $D>0$ satisfying it.)
We construct our channel $\mE_2$ on $SC$ as
\bal
\mE_2 (\tilde\rho\otimes \tau):=K_0 (\tilde\rho\otimes \tau) K_0^\dagger + \sum_{k=1}^3 \sum_{(i,j)=(1,2), (2,3), (1,3)} K_{k, (i,j)} (\tilde\rho\otimes \tau) K_{k,(i,j)}^\dagger ,
\eal
where $K_0$ is a projection operator onto the subspace spanned by $\{ \ket{E_iE_i}\}_{i=1}^3\oplus \{ \ket{e^+_{ij}}\} _{(i,j)=(1,2), (2,3), (1,3)}$
\bal
K_0:=\sum_{i=1}^3 \ket{E_iE_i}\bra{E_iE_i} +\sum_{(i,j)=(1,2), (2,3), (1,3)} \ket{e^+_{ij}}\bra{e^+_{ij}},
\eal
and $K_{k, (i,j)}$'s are defined as 
\bal
K_{k, (i,j)} :=p_{k,(i,j)} \ket{E_kE_k}\bra{e^-_{ij}}
\eal
with
\bal
p_{k,(i,j)}=
\begin{cases}
1 & k=2, \\
0 & k=1,3.
\end{cases}
\eal
Note that $K_{k, (i,j)}$'s only affect the diagonal elements.
Remarkably, $\sigma_0:= K_0 (\tilde\rho\otimes \tau) K_0^\dagger $ satisfies
\bal
(\Tr_S[\sigma_0])_{ij}=\tau_{ij}
\eal
for $(i,j)=(1,2), (2,3), (1,3), (1,1), (3,3)$, and 
\bal
(\Tr_S[\sigma_0])_{22}\leq \tau_{22}.
\eal
The terms with $K_{k, (i,j)}$ compensate the decrement in $(2,2)$ elements and recover the original state $\tau$ in the catalyst.
In addition, 
\bal
\rho_{13}'=\frac{152}{225}\delta^2>0
\eal
is also satisfied.

\end{proof}

One may wonder how to come up with this construction of states.
Here, we briefly describe the computation behind this choice.

We fix $\tilde\rho$ and $K_0$ as the aforementioned ones, and tune $\tau$.
We put restrictions $\tau_{11}=\tau_{33}$ and $\tau_{12}=\tau_{23}$ in our search.
Then, direct calculation of $K_0( \tilde\rho\otimes \tau) K_0^\dagger$ yields
\bal
(\Tr_S[\sigma_0])_{11}&=\frac12 \tau_{11}+\frac{1}{16}\tau_{22}+\frac{\delta}{2}\tau_{12}, \\
(\Tr_S[\sigma_0])_{22}&=\frac14 \tau_{11}+\frac58 \tau_{22}+\delta \tau_{12}, \\
(\Tr_S[\sigma_0])_{12}&=\frac{7}{16}\tau_{12}+\frac{3\delta}{4}\tau_{11}+\frac{\delta}{2}\tau_{22}+\frac{\delta}{4}\tau_{13}, \\
(\Tr_S[\sigma_0])_{13}&=\frac38 \tau_{13}+\frac{\delta}{2}\tau_{12}.
\label{tau0-computation}
\eal
In order to recover the original state of $C$, 
\bal
(\Tr_S[\sigma_0])_{11}&\leq \tau_{11}, \label{cond11} 
\eal
\bal
(\Tr_S[\sigma_0])_{22}&\leq \tau_{22}, \label{cond22} 
\eal
\bal
(\Tr_S[\sigma_0])_{12}&=\tau_{12}, \label{cond12} 
\eal
\bal
(\Tr_S[\sigma_0])_{13}&=\tau_{13} \label{cond13}
\eal
are necessary and sufficient.
The first two conditions are on diagonal elements, whose decrement can be compensated by $K_{k, (i,j)}$'s with proper $p_{k, (i,j)}$'s.
The last two conditions are on off-diagonal elements.

Substituting Eqs.~\eqref{cond12} and \eqref{cond13} into \eqref{tau0-computation}, and using $\tau_{22}=1-2\tau_{11}$, we have
\bal
\tau_{13}&=\frac{4\delta}{5}\tau_{12}, \\
\left( \frac{9}{16}-\frac{\delta^2}{5}\right) \tau_{12}&=\frac{\delta}{2}-\frac{\delta}{4}\tau_{11}. \label{line}
\eal
We should ensure the conditions \eqref{cond11} and \eqref{cond12}, which read
\bal
\frac54 \tau_{11}&\geq \frac18 +\delta \tau_{12}, \\
3&\geq 8\tau_{11}+8\delta \tau_{12}.
\eal
They imply that the coordinate $(\tau_{11}, \tau_{12})$ should settle in the triangle with three corners $(1/10, 0)$, $(3/8, 0)$, and $(2/9, 11/(18\delta))$.
Importantly, the area of this triangle increases as $\delta$ decreases, while the value of $\tau_{12}$ at $\tau_{11}=2/9$ on the line given by \eqref{line} decreases as $\delta $ decreases.
Hence, the line of \eqref{line} has some overlap with this triangle with small $\delta$. (In fact, $\delta\leq 1/4$ suffices to have the overlap.)
By adopting a particular choice $\frac54 \tau_{11}=\frac18 +\delta \tau_{12}$, which means the equality condition of \eqref{cond11}, we can solve $\tau_{12}$ and obtain \eqref{sigma}.
We remark that $\rho_{13}'$ is computed as
\bal
(\Tr_C[\sigma_0])_{13}=\frac14 \left(\frac32\tau_{13}+2\delta\tau_{12}\right),
\eal
which always takes a non-zero value in our setup.

\end{document}